\newtheorem{theorem}{Theorem}
\newtheorem{corollary}[theorem]{Corollary}
\long\def\symbolfootnote[#1]#2{\begingroup
	\def\thefootnote{\fnsymbol{footnote}}\footnote[#1]{#2}\endgroup}
\renewcommand{\paragraph}[1]{{\bf #1}}
\long\def\symbolfootnote[#1]#2{\begingroup
	\def\thefootnote{\fnsymbol{footnote}}\footnote[#1]{#2}\endgroup}
\newcommand{\com}[1]{\textbf{\color{blue} (COMMENT: #1)}}
\else\newcommand{\com}[1]{}\fi
\begin{document}
	\title{Generalized Simple Regenerating Codes: Trading Sub-packetization and Fault Tolerance}
	
	\author{Zhengyi Jiang, Hao Shi, Zhongyi Huang, Bo Bai, Gong Zhang and Hanxu Hou
 \thanks{Z. Jiang, H. Shi and Z. Huang are with 
the Department of Mathematics Sciences, Tsinghua University, Beijing, China~(E-mail: jzy21@mails.tsinghua.edu.cn, shih22@mails.tsinghua.edu.cn, zhongyih@tsinghua.edu.cn). B. Bai and G. Zhang are with the Theory Lab, Central Research Institute, 2012 Labs, Huawei Tech. Co. Ltd., Hong Kong SAR~(E-mail: baibo8@huawei.com, nicholas.zhang@huawei.com).
H. Hou is with the School of Electrical Engineering \& Intelligentization, Dongguan University of Technology~(E-mail: houhanxu@163.com).
This work was partially supported by the National Key R\&D Program of
China (No. 2020YFA0712300), the National Natural Science Foundation of China (No. 62071121, 61871136, 12025104),
Basic Research Enhancement Program of China under Grant 2021-JCJQ-JJ-0483.}}
	
	\maketitle
	
	\begin{abstract}
		 Maximum distance separable (MDS) codes have the optimal trade-off between storage efficiency and fault tolerance, which are widely used in distributed storage systems. As typical non-MDS codes, simple regenerating codes (SRCs) can achieve both smaller repair bandwidth and smaller repair locality than traditional MDS codes in repairing single-node erasure.
		 
In this paper, we propose {\em generalized simple regenerating codes} (GSRCs) that can support much more parameters than that of SRCs. We show that there is a trade-off between sub-packetization and fault tolerance in our GSRCs, and SRCs achieve a special point of the trade-off of GSRCs. We show that the fault tolerance of our GSRCs increases when the sub-packetization  increases linearly. We also show that our GSRCs can locally repair any singe-symbol erasure and any single-node erasure, and the repair bandwidth of our GSRCs is smaller than that of the existing related codes.
		
	\end{abstract}
	
	\begin{IEEEkeywords}
		Simple regenerating code, fault tolerance, sub-packetization, repair locality, repair bandwidth
	\end{IEEEkeywords}
	
	\IEEEpeerreviewmaketitle
	
	\section{Introduction}
	\label{sec:intro}

An $(n,k,m)$ array code encodes a data file of $km$ {\em data symbols} to
obtain $(n-k)m$ {\em coded symbols} such that the total $nm$ symbols are stored in $n$ nodes with each node
storing $m$ symbols, where $k < n$ and $m\geq 1$. The number of symbols stored in
each node, i.e., the size of $m$, is called {\em sub-packetization level}.
The codes are {\em maximum distance separable (MDS)} codes if any $k$ out of $n$ nodes can retrieve all
$km$ data symbols.
{\em Repair bandwidth} defined as the number of symbols downloaded from the helper nodes in repairing one single erased node is an important metric in designing the codes. It is shown in \cite{dimakis2010} that the minimum repair bandwidth of $(n,k,m)$ MDS array codes by connecting $d$ helper nodes ($d\geq k$) is $\frac{dm}{d-k+1}$ symbols, and the codes achieving the above minimum repair bandwidth for each node are called {\em minimum storage regenerating} (MSR) codes. However, explicit constructions of high code-rate (i.e., $\frac{k}{n}>\frac{1}{2}$) MSR codes require extensively large sub-packetization \cite{2018A}. It is of practical significance to design array codes that have lower sub-packetization, lower repair bandwidth and are easy to implement.
		
Simple regenerating codes (SRCs) \cite{SRC,SRC2} are such non-MDS array codes which encode $km$ data symbols to obtain $(n-k)m+n$ coded symbols. The total $n(m+1)$ symbols are stored in $n$ nodes, each node stores $m+1$ symbols. The design idea of SRCs is that we first create $m$ instances of an $(n,k,1)$ MDS code and then design the extra $n$ coded symbols by XORing some symbols of the obtained $nm$ symbols in the first step. We denote the SRC by $(n,k,m)$-SRC.  
In this paper, we propose {\em generalized simple regenerating code} (GRSC) that can not only support much more parameters but also have larger fault tolerance.

\subsection{Examples}
In the following, we present two examples of $(n,k)=(18,16)$ to illustrate our main idea.

Fig. \ref{fig:1.(a)} shows the codeword of $(n=18,k=16,m=2)$-SRC, the $m+1=3$ symbols in the same row are stored in a node. In Fig. \ref{fig:1.(a)}, $(x_{0,i},x_{1,i},\cdots,x_{17,i})^T$ is a codeword of an $(18,16,1)$ MDS code for $i\in\{0,1\}$, where the first $k=16$ symbols are data symbols and the last two symbols are coded symbols.
We claim that $(n=18,k=16,m=2)$-SRC can recover any $n-k+1=3$ erased nodes.
Suppose that nodes 0-2 are erased (the nine symbols with gray part in Fig. \ref{fig:1.(a)}). First, we can download $x_{3,0}$ and $x_{3,0}+x_{2,1}$ to recover the symbol $x_{2,1}$. Then, we can obtain $k=16$ symbols in the second column and thus recover the erased two symbols $x_{0,1},x_{1,1}$. Together with the recovered symbol $x_{1,1}$, we can recover the symbol $x_{2,0}$ by downloading  $x_{2,0}+x_{1,1}$. Next, we can recover the two erased symbols $x_{0,0},x_{1,0}$ in the first column by downloading the other $k=16$ symbols. Finally, we can recover the erased three coded symbols by downloading the corresponding six symbols. Similarly, we can show that $(n=18,k=16,m=2)$-SRC can recover any $n-k+1=3$ erased nodes (refer to Theorem \ref{th4} and Corollary \ref{col5} in Section \ref{sec:2.4}).

Fig. \ref{fig:1.(b)} shows another example of GSRCs with $(n,k,m,a)=(18,16,4,2)$, the $m+a=6$ symbols in the same row are stored in a node. In Fig. \ref{fig:1.(b)}, the $n=18$ symbols in each of the first four columns are codeword of an $(18,16,1)$ MDS code, where the first $k=16$ symbols in each of the first four columns are data symbols and the last two symbols are coded symbols. The symbols in the last two columns are linear combinations of some symbols in the first four columns, where $\alpha$ is the primitive element in $\mathbb{F}_{q}$ and $q>18$. We can see that the two codes in Fig. \ref{fig:1} have the same storage overhead 1.69 (defined as the ratio of the total number of symbols to the total number of data symbols). 
	
		\begin{figure}[htpb]
		\centering
		\subfigure[$(n=18,k=16,m=2)$-SRC.]{           
			\includegraphics[width=3.6cm]{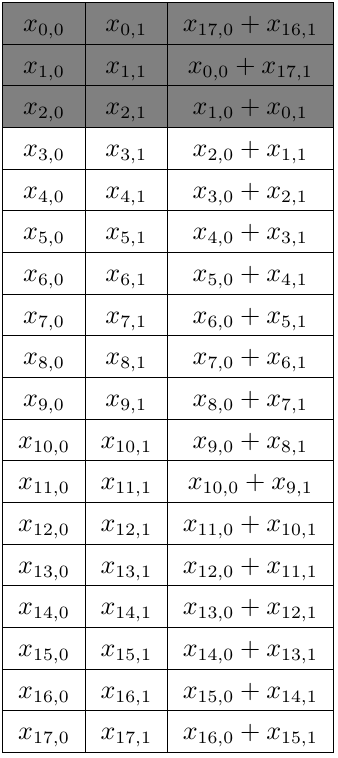}
			\label{fig:1.(a)}}
		\hspace{0in}
		\subfigure[$(n=18,k=16,m=4,a=2)$-GSRC.]{
			\includegraphics[width=11.5cm]{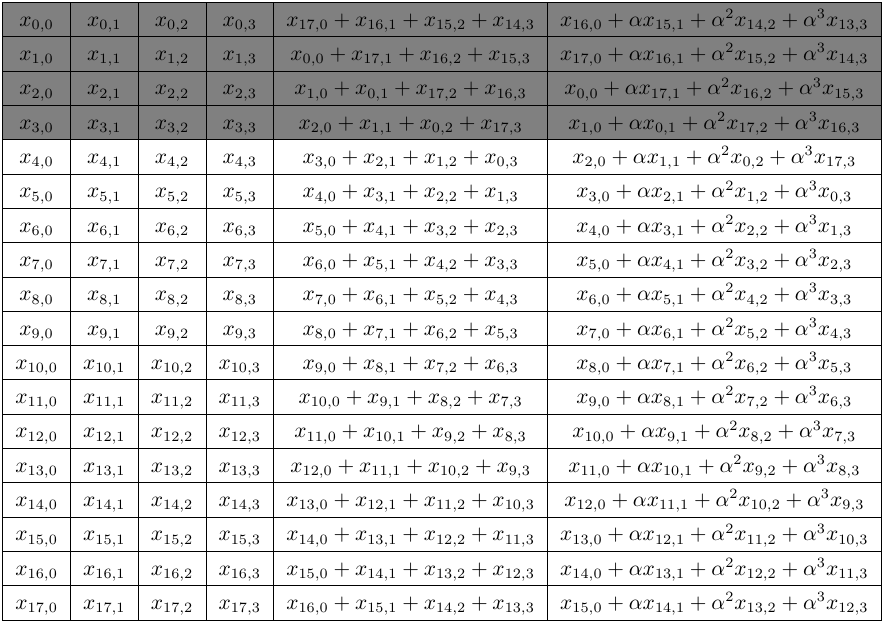}
			\label{fig:1.(b)}}
		\\
		
		\caption{Examples of $(n=18,k=16,m=2)$-SRC (\ref{fig:1.(a)}) and $(n=18,k=16,m=4,a=2)$-GSRC \ref{fig:1.(b)}.}
		\label{fig:1}
	\end{figure}
	
We claim that $(n=18,k=16,m=4,a=2)$-GSRC can recover any $n-k+a=4$ erased nodes. Suppose that nodes 0-3 are erased (the symbols with gray part in Fig. \ref{fig:1.(b)}). First, we can download $x_{4,2},x_{5,1},x_{6,0}$ and $x_{3,3}+x_{4,2}+x_{5,1}+x_{6,0}$ to recover the erased symbol $x_{3,3}$. Then, we download $x_{5,0},x_{4,1}$ and $x_{5,0}+x_{4,1}+x_{3,2}+x_{2,3}, x_{5,0}+\alpha x_{4,1}+\alpha^2x_{3,2}+\alpha^3x_{2,3}$ to recover $x_{3,2},x_{2,3}$ by
	\begin{equation}
		\begin{pmatrix}
			x_{3,2}\\ 
			x_{2,3}
		\end{pmatrix}=\begin{pmatrix}
			1 & 1\\ 
			\alpha^2 & \alpha^3
		\end{pmatrix}^{-1}\begin{pmatrix}\begin{pmatrix}
			x_{5,0}+x_{4,1}+x_{3,2}+x_{2,3}\\ 
			x_{5,0}+\alpha x_{4,1}+\alpha^2x_{3,2}+\alpha^3x_{2,3}
		\end{pmatrix}-\begin{pmatrix}
			1 & 1\\ 
			1 & \alpha
		\end{pmatrix}\begin{pmatrix}
			x_{5,0}\\ 
			x_{4,1}
		\end{pmatrix}\end{pmatrix}.\nonumber
	\end{equation}
Next, we can recover the two erased symbols $x_{0,3},x_{1,3}$ by accessing the other $k=16$ symbols in the fourth column. Together with $x_{1,3}$, we can download $x_{4,0}$ and $x_{4,0}+x_{3,1}+x_{2,2}+x_{1,3}, x_{4,0}+\alpha x_{3,1}+\alpha^2x_{2,2}+\alpha^3x_{1,3}$ to recover $x_{3,1}, x_{2,2}$ by 
		\begin{equation}
		\begin{pmatrix}
			x_{3,1}\\ 
			x_{2,2}
		\end{pmatrix}=\begin{pmatrix}
			1 & 1\\ 
			\alpha & \alpha^2
		\end{pmatrix}^{-1}\begin{pmatrix}\begin{pmatrix}
			x_{4,0}+x_{3,1}+x_{2,2}+x_{1,3}\\ 
			x_{4,0}+\alpha x_{3,1}+\alpha^2x_{2,2}+\alpha^3x_{1,3}
		\end{pmatrix}-\begin{pmatrix}
			1 & 1\\ 
			1 & \alpha^3
		\end{pmatrix}\begin{pmatrix}
			x_{4,0}\\ 
			x_{1,3}
		\end{pmatrix}\end{pmatrix}.\nonumber
	\end{equation}
We can recover the two erased symbols $x_{0,2},x_{1,2}$ by downloading the other surviving $k=16$ symbols in the third column. Together with $x_{0,3},x_{1,2}$, we can download $x_{3,0}+x_{2,1}+x_{1,2}+x_{0,3}$ and $x_{3,0}+\alpha x_{2,1}+\alpha^2x_{1,2}+\alpha^3x_{0,3}$ to recover $x_{3,0}, x_{2,1}$ by 
	\begin{equation}
		\begin{pmatrix}
			x_{3,0}\\ 
			x_{2,1}
		\end{pmatrix}=\begin{pmatrix}
			1 & 1\\ 
			1 & \alpha
		\end{pmatrix}^{-1}\begin{pmatrix}\begin{pmatrix}
			x_{3,0}+x_{2,1}+x_{1,2}+x_{0,3}\\ 
			x_{3,0}+\alpha x_{2,1}+\alpha^2x_{1,2}+\alpha^3x_{0,3}
		\end{pmatrix}-\begin{pmatrix}
			1 & 1\\ 
			\alpha^2 & \alpha^3
		\end{pmatrix}\begin{pmatrix}
			x_{1,2}\\ 
			x_{0,3}
		\end{pmatrix}\end{pmatrix}.\nonumber
	\end{equation}
Similarly, we can recover the erased symbols in the first two columns and finally recover all the erased symbols in the last two columns.
Actually, $(n=18,k=16,m=4,a=2)$-GSRC can recover any $n-k+a=4$ erased nodes (refer to Theorem \ref{th11} in Section \ref{sec:2.5} for the repair method with general parameters).
	
It can be seen from the two examples in Fig. \ref{fig:1} that GSRCs have better fault tolerance than SRCs with the same parameters and storage overhead. In this paper, we present the construction of GSRCs and show that there is a trade-off between sub-packetization and fault tolerance.
	
    \subsection{Contributions}
	Our main contributions are as follows.
	\begin{itemize}
\item [(1)] First, we give construction of GSRCs that can support more parameters when compared to SRCs, specifically, SRCs can be viewed as a special case of GSRCs with $a=1$. We show that GSRCs with $a=1$ can recover any $n-k+1$ erased nodes for most high code-rate parameters, note that the fault tolerance of SRCs is $n-k$ in \cite{SRC}. We also show that GSRCs with $a=1$ can recover most
pattern of $n-k+2$ erased nodes (refer to Corollary \ref{col7}). Moreover, we show that GSRCs with $a=1$ can recover any $2n-2k+1$ erased symbols and 
most pattern of $2n-2k+2$ erased symbols (refer to Corollary \ref{col3}).
\item [(2)] Second, we show that GSRCs can recover any $n-k+a$ erased nodes for most of the parameters (refer to Theorem \ref{th11}), i.e., there is a trade-off between sub-packetization $m+a$ and fault tolerance $n-k+a$.
\item [(3)] Third, we show that our GSRCs have lower repair bandwidth and lower repair locality than that of the existing related codes, such as Locally Recoverable Codes (LRCs) (refer to Theorem \ref{th12}-\ref{th13}).
\end{itemize}

\subsection{Related Works}
There are many constructions of non-MDS codes, such as LRCs \cite{huang2012,2014Locally} with one node storing one symbol, bundles of RAID array codes \cite{Bundle} of which the sub-packetization is no less than two. LRCs divide the data symbols into several groups and obtain some local coded symbols for each group, therefore can recover any single symbol by accessing some other symbols in the same group. Partial maximum distance separable (PMDS) codes \cite{2013PMDS,2016PMDS,2017PMDS,2019PMDS,2020PMDS} are special LRCs which can recover all
erasure patterns that are information theoretically correctable.

Bundles of RAID array codes \cite{Bundle} organize the $km$ data symbols by $k\times m$ array and obtain $n\times (m+1)$ codeword array by first adding $n-k$ local coded symbols for each column and then adding one local coded symbol for each row, where $n=k+1,k+2$.
The $m+1$ symbols in the same row are stored in a node. We show theoretically that our GSRCs have lower repair bandwidth, lower repair locality for single-node erasure and higher fault tolerance than both LRCs and codes in \cite{Bundle} under most high code-rate parameters.

\subsection{Paper Organization}
The rest of the paper is organized as follows. Section \ref{sec:2} presents the construction of GSRCs.
Section \ref{sec:2.4} shows the trade-off between sub-packetization and fault tolerance.
Section \ref{sec:2.5} shows the repair bandwidth for single-node failure of our codes.
Section \ref{sec:3} evaluates the performance for GSRCs and related codes. Section \ref{sec:4} concludes the paper.
	
\section{Generalized Simple Regenerating Codes}
	\label{sec:2}
	In this section, we present the construction of GSRCs that encodes $km$ data symbols into $n\times (m+a)$ array, where $n,k,m,a$ are positive integers with $n>k$ and $n\geq m+a$.
	In this paper, for any integer $x$, we denote $<x>$ as the remainder of $x$ when we divide $x$ by $n$.
	
	\subsection{The Construction}
	\label{sec:2.1}
	
	We represent the $km$ data symbols by an $k\times m$ array and let the symbol in row $j$ and column $i$ be $x_{j,i}$, where $j=0,1,\cdots,k-1$ and $i=0,1,\cdots,m-1$. 
	
	First, we create $n-k$ coded symbols for each column. For $i=0,1,\cdots,m-1$, we encode the $k$ data symbols in column $i$ to obtain $n-k$ coded symbols $x_{k,i},x_{k+1,i},\cdots,x_{n-1,i}$ such that the $n$ symbols $x_{0,i},x_{1,i},\cdots,x_{n-1,i}$ of the obtained $n\times m$ array form a codeword of $(n,k,1)$ MDS codes.
	
	Second, we create $a$ coded symbols for each row. For $j=0,1,\cdots,n-1$, the $a$ coded symbols $p_{j,i}$ with $i=0,1,\cdots,a-1$ are computed as
	\begin{eqnarray}
		p_{j,i}=\sum_{t=0}^{m-1}\alpha^{it}x_{<j-t-1-i>,t},\label{eq.1}
	\end{eqnarray}
	where $\alpha$ is a primitive element in $\mathbb{F}_{q}$ and $q$ is a power of a prime number with $q>n\geq m+a$. We denote the obtained GSRCs as $(n,k,m,a)$-GSRC.
The $n\times (m+a)$ codeword array is shown in Fig. \ref{fig:2}. We label the indices of the $n$ nodes from $0$ to $n-1$ and the indices of the $m+a$ columns from 0 to $m+a-1$. According to Eq. \eqref{eq.1}, for any $j=0,1,\cdots,n-1$, we can see that the $a$ coded symbols $p_{j,0},p_{<j+1>,1},\cdots,p_{<j+a-1>,a-1}$ are linear combinations of the $m$ data symbols $$x_{<j-1>,0},x_{<j-2>,1},\cdots,x_{<j-m>,m-1},$$
i.e.,
\begin{equation}
\begin{bmatrix}
p_{j,0}\\ p_{<j+1>,1}\\ \vdots \\ p_{<j+a-1>,a-1}\\
\end{bmatrix}=\begin{bmatrix}
\alpha^{0}& \alpha^{0}& \cdots & \alpha^{0}\\
\alpha^{0}& \alpha^{1}& \cdots & \alpha^{m-1}\\
\vdots& \vdots& \ddots & \vdots\\
\alpha^{0}& \alpha^{a-1}& \cdots & \alpha^{(a-1)(m-1)}\\
\end{bmatrix}\begin{bmatrix}
x_{<j-1>,0}\\ x_{<j-2>,1}\\ \vdots \\ x_{<j-m>,m-1}\\
\end{bmatrix}.
\label{eq:mds2}
\end{equation}
Without loss of generality, suppose that the $m+a$ symbols in Eq. \eqref{eq:mds2} are a codeword of $(m+a,m,1)$ MDS codes and we can retrieve all the symbols from any $m$ out of the $m+a$ symbols. 
For example, the $m+a$ symbols $$x_{n-1,0},x_{n-2,1},\cdots,x_{n-m,m-1},p_{0,0},p_{1,1},\cdots,p_{a-1,a-1}$$ with bold font in Fig. \ref{fig:2} form a codeword.
When $a=1$, $(n,k,m,a)$-GSRCs are reduced to SRCs in \cite{SRC}. The example in Fig. \ref{fig:1.(b)} is $(18,16,4,2)$-GSRC.
	
	\begin{figure}[htpb]
		\centering
		\includegraphics[width=0.9\linewidth]{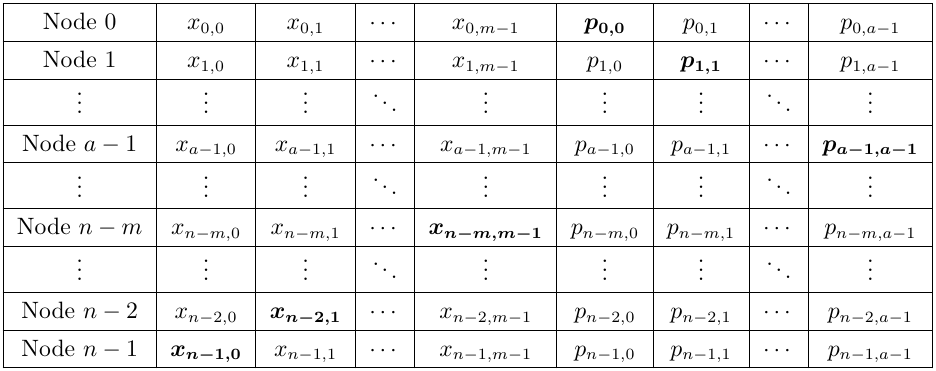}
		\caption{The codeword of $(n,k,m,a)$-GSRC. The $m+a$ symbols with bold font in the figure is a codeword of $(m+a,m,1)$ MDS code.}
		\label{fig:2}
	\end{figure}
	
	{\em Remark:} For $i\in\{0,1,\cdots,m-1\}$ and $j\in\{0,1,\cdots,a-1\}$, denote by $\mathbf{X}_i=(x_{0,i},x_{1,i},\cdots,x_{n-1,i})^T$ and $\mathbf{P}_j=(p_{0,j},p_{1,j},\cdots,p_{n-1,j})^T$. Let $r:=n-k$. Recall that $\mathbf{X}_i$ is a codeword of $(n,k,1)$ MDS code, where $i=0,1,\cdots,m-1$. Suppose that the $r\times n$ parity check matrix of the $(n,k,1)$ MDS code is $H_{r,n}$, we have $H_{r,n}\cdot\mathbf{X}_i=0$, where 
		\begin{eqnarray}
		H_{r,n}=\begin{pmatrix}
			1 & 1 & \cdots & 1\\ 
			1 & \alpha & \cdots & \alpha^{n-1}\\ 
			\vdots & \vdots & \ddots & \vdots\\ 
			1 & \alpha^{r-1} & \cdots &\alpha^{(r-1)(n-1)} 
		\end{pmatrix}.\label{eq.2}
	\end{eqnarray}

	\subsection{Repair Method of Multi-Symbol Erasures for $a=1$}
	\label{sec:2.3}
When $a=1$, by Eq. \eqref{eq.1}, the coded symbol $p_{j,0}$ is
		\begin{eqnarray}
			p_{j,0}=\sum_{t=0}^{m-1}x_{<j-t-1>,t},j\in\{0,1,\cdots,n-1\}.\label{eq.3}
		\end{eqnarray}
We define {\em coded group} $P_{j,0}$ as a set containing the coded symbol $p_{j,0}$ and the $m$ data symbols in Eq. \eqref{eq.3}.
We can repair any one erased symbol in coded group $P_{j,0}$ by downloading the other $m$ symbols.		
 
 We first show that we can repair any $2r+1$ erased symbols.
 
	\begin{theorem}
		\label{th1}
		In $(n,k,m,a=1)$-GSRC, we can repair any $2r+1$ erased symbols.
	\end{theorem}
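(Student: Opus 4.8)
The plan is to reduce the statement to recovering only the $km$ data symbols, and then to play two structures off against each other: each of the $m$ data columns $\mathbf{X}_0,\dots,\mathbf{X}_{m-1}$ is a codeword of an $(n,k,1)$ MDS code and therefore tolerates any $r=n-k$ erasures on its own, while the $n$ coded groups $P_{0,0},\dots,P_{n-1,0}$ partition the $n(m+1)$ symbols of the array into blocks of size $m+1$ inside each of which any single erasure is correctable. First I would note that as soon as all data symbols are known, every erased parity symbol $p_{j,0}$ is recovered from Eq.~\eqref{eq.3}; hence it suffices to recover the data part.

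Next I would introduce the counting. Let $e_i$ denote the number of erasures lying in data column $i$ for $0\le i\le m-1$, and let $e_p$ denote the number of erasures in the parity column, so that $\sum_{i=0}^{m-1}e_i+e_p=2r+1$. Since $2r+1<2(r+1)$, at most one of the $m+1$ columns can contain $r+1$ or more erasures. If every data column contains at most $r$ erasures, then each data column is decoded directly from its $\ge k$ surviving entries as an $(n,k,1)$ MDS codeword, all data is recovered, and we are done by the first paragraph.

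In the remaining case there is exactly one overloaded data column, say column $i^{*}$, with $e_{i^{*}}\ge r+1$; then $\sum_{i\ne i^{*}}e_i+e_p\le r$, so every data column other than $i^{*}$ has at most $r$ erasures and is decoded at once, and moreover $e_p\le r$. Now I would peel off erasures in column $i^{*}$ using the coded groups: the erased entry in row $l$ of column $i^{*}$ lies in the unique coded group $P_{<l+i^{*}+1>,0}$, whose other $m$ members are the $m-1$ already-recovered data symbols drawn from columns $\ne i^{*}$ together with one parity symbol; whenever that parity symbol is not erased, the group has a single unknown and $x_{l,i^{*}}$ is recovered from Eq.~\eqref{eq.3}. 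The map sending an erased row of column $i^{*}$ to the index of its coded group is injective, so at most $e_p$ of these groups carry an erased parity, and consequently at least $e_{i^{*}}-e_p$ of the erased entries of column $i^{*}$ are recovered this way, leaving at most $e_p\le r$ erasures in that column. A final MDS decode of column $i^{*}$ completes the recovery of all data symbols, after which Eq.~\eqref{eq.3} returns every remaining erased parity symbol.

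The genuinely routine pieces are the per-column MDS decodes and the pigeonhole inequality $2r+1<2(r+1)$; the step I expect to be the crux, and would write out carefully, is the coded-group peeling: checking that the $n$ coded groups form a partition of all symbols, that for an erased entry of the overloaded column the only other possibly-unknown member of its group is a single parity symbol, and that injectivity of ``erased row $\mapsto$ group index'' forces the number of blocked groups down to $e_p\le r$ so that the residual MDS decode stays within its correcting radius. It is also worth verifying the extreme configuration $e_{i^{*}}=2r+1$, $e_p=0$, where the peeling step alone clears all erasures.
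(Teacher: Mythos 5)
Your proof is correct, and while it uses the same two ingredients as the paper (the $(n,k,1)$ MDS property of each data column and the partition of the array into coded groups $P_{j,0}$), it organizes them differently. The paper's Case 2 is iterative: it argues by pigeonhole that among the $2r+1$ erasures at least one coded group contains exactly one erased symbol (otherwise the $r+a'$ groups meeting the heavy column would each hold two erasures, forcing $\geq 2r+2$ erasures), repairs that single symbol from its group, and then repeats ``similarly'' on the reduced pattern — an implicit induction whose termination is left to the reader. Your version is a single two-round schedule: first MDS-decode every light column (possible since only one column can exceed $r$ erasures when the total is $2r+1<2(r+1)$), then peel the heavy column $i^{*}$ through its coded groups, where after round one the only obstruction in a group is an erased parity; injectivity of $l\mapsto\langle l+i^{*}+1\rangle$ gives at least $e_{i^{*}}-e_{p}$ recoveries at once, leaving at most $e_{p}\leq r$ erasures for a final MDS decode. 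What your route buys is an explicit, non-recursive termination argument and a sharper per-round count; what the paper's route buys is that the same one-symbol-at-a-time lemma is reused almost verbatim in the $2r+2$ analysis of Theorem \ref{th2}, where your ``exactly one overloaded column'' dichotomy no longer holds. Both arguments are sound.
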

	\begin{proof}
Suppose that the number of erased symbols in column $i$ is $t_i$, where $0\leq t_i\leq \max (n,2r+1)$ and $i\in\{0,1,\cdots,m\}$ such that $\sum_{i=0}^{m}t_i=2r+1$. 

\textbf{Case 1.}
If $t_i\leq r$ for all $i\in\{0,1,\cdots,m-1\}$, then we can directly repair the erased $\sum_{i=0}^{m-1}t_i$ symbols in the first $m$ columns, since the symbols in columns $i$ form a codeword of $(n,k,1)$ MDS code. We can repair the erased $t_{m}$ coded symbols by Eq. \eqref{eq.3}.
		
\textbf{Case 2.}
If there is a certain $\ell\in\{0,1,\cdots,m-1\}$ such that the number of erased symbols in column $\ell$ is larger than $r$. Suppose that $t_{\ell}=r+a'$, where $1\leq a'\leq r+1$. We have $\sum_{i=0,i\neq \ell}^{m}t_i=r+1-a'$. 

Recall that if one coded group contains only one erased symbol, then we can repair the erased symbol by downloading the other $m$ symbols in the coded group. Otherwise, if one coded group contains more than one erased symbols, we can't repair these erased symbols within the symbols in the coded group. We claim that we can repair at least one erased symbol in column $\ell$ by downloading the other $m$ symbols in the coded group which contains the erased symbol. Suppose that we can't repair any erased symbol in column $\ell$ by the above repair method, then each of the $r+a'$ coded groups which contain the $r+a'$ erased symbols in column $\ell$ contains at least two erased symbols. The erased symbols should be no less than $2r+2a'\geq 2r+2$, which contradicts to that the total number of erased symbols is $2r+1$.
Therefore, we can repair 
at least one erased symbol in column $\ell$ by the coded group and the other erased symbols similarly.
\end{proof}

When the number of erased symbols is $2r+2$, we show in Theorem \ref{th2} and Corollary \ref{col3} that we can recover most of the $2r+2$ erased patterns and some other patterns can't be recovered. 

\begin{theorem}
\label{th2}
In $(n,k,m,a=1)$-GSRC, we can recover the erased $2r+2$ symbols, except that the erased $2r+2$ symbols belong to $r+1$ coded groups and they are located in two columns, each column contains $r+1$ erased symbols.
\end{theorem}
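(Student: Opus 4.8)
The plan is to repair iteratively using three elementary moves and to hand off to Theorem~\ref{th1} as soon as the number of outstanding erasures drops to $2r+1$ (recall $r=n-k$): (O1) decode a column $i\in\{0,1,\dots,m-1\}$ that currently has at most $r$ erased symbols, using the $(n,k,1)$ MDS code on that column; (O2) decode a coded group that currently has exactly one erased symbol by downloading its other $m$ symbols; (O3) once at most $2r+1$ symbols remain erased, finish by Theorem~\ref{th1}. Let $t_i$ be the number of erased symbols in column $i$, so $\sum_{i=0}^{m}t_i=2r+2$. Call a column \emph{overloaded} if it has more than $r$ erasures; since $3(r+1)>2r+2$, at most two columns are overloaded. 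The proof is a case analysis on the overloaded columns, and the excluded configuration will turn out to be exactly the one branch in which none of (O1)--(O3) can be started.

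If no data column (index $\le m-1$) is overloaded, apply (O1) to every data column; afterwards each coded group has at most one erased symbol (its symbol in column $m$), so (O2) finishes. Now suppose exactly one data column $\ell$ is overloaded, say $t_\ell=r+a'$ with $a'\ge1$. If $a'\ge2$, the $r+a'$ erased symbols of column $\ell$ lie in $r+a'$ distinct coded groups; were each of these to contain a second erased symbol, the total would be at least $2(r+a')\ge 2r+4>2r+2$, impossible, so one of them has a unique erased symbol and we apply (O2) then (O3). If $a'=1$, then either one of the $r+1$ coded groups meeting column $\ell$ has a unique erased symbol (apply (O2), then (O3)), or each of them has exactly two erased symbols — the budget $2r+2$ forces ``exactly'' — so all $2r+2$ erasures lie in these $r+1$ groups, the ``partner'' erasure of each group lying outside column $\ell$. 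These $r+1$ partners cannot all sit in a single column, for that column would then have $r+1$ erasures and be overloaded, contradicting the uniqueness of $\ell$; hence they occupy at least two columns, each therefore with at most $r$ erasures. Apply (O1) to every data column among those (at least one exists, since at most one of them is column $m$); each affected group is then left with a single erasure, in column $\ell$, so (O2) recovers at least one symbol of column $\ell$, dropping $t_\ell$ to at most $r$; apply (O1) to column $\ell$, and finally (O2) to the groups, if any, whose partner lay in column $m$.

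It remains to treat two overloaded columns $c_1,c_2$; by the budget each has exactly $r+1$ erasures and every other column has none. Let $J_a$ be the set of coded groups containing an erased symbol of column $c_a$, so $|J_1|=|J_2|=r+1$ (distinct rows of a column lie in distinct coded groups). If $J_1\ne J_2$, pick a group in the symmetric difference: since all columns other than $c_1,c_2$ are erasure-free, this group has exactly one erased symbol, so (O2) and then (O3) finish. If $J_1=J_2$, then the $2r+2$ erased symbols belong to exactly these $r+1$ coded groups, lie in the two columns $c_1,c_2$, and each of these columns holds $r+1$ of them — this is precisely the configuration excluded in the statement; in it no column has at most $r$ erasures, no coded group has a unique erasure, and more than $2r+1$ symbols are erased, so the method stalls and the theorem accordingly excludes it (indeed, e.g.\ when $r=1$ the corresponding linear system genuinely has a one-parameter family of solutions).

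The main obstacle is the bookkeeping in the $a'=1$ branch: unlike the strict pigeonhole inequality used in the proof of Theorem~\ref{th1}, here the count is tight, so one must argue carefully that the only way to be stuck after the first round is for all partner erasures to collapse into one column, and then verify that cascading (O1) and (O2) back to column $\ell$ clears every erasure. Everything else — checking that each move's precondition still holds after the previous move and that at most $2r+1$ symbols remain before invoking Theorem~\ref{th1} — is routine.
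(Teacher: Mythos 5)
Your overall strategy --- greedily decoding any column with at most $r$ erasures via its $(n,k,1)$ MDS code, decoding any coded group with a single outstanding erasure, and handing off to Theorem~\ref{th1} as soon as at most $2r+1$ erasures remain --- is the same as the paper's, and in one place you are more careful than the paper: your $J_1\neq J_2$ branch explicitly disposes of the ``two columns with $r+1$ erasures each, but spread over more than $r+1$ coded groups'' patterns, which the paper's combinatorial argument glosses over before jumping to the excluded configuration.

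There is, however, one concrete misstep in your $a'=1$ branch. You assert that the $r+1$ partner erasures ``cannot all sit in a single column, for that column would then have $r+1$ erasures and be overloaded, contradicting the uniqueness of $\ell$.'' But $\ell$ was only assumed to be the unique overloaded \emph{data} column; if all partners sit in column $m$ (the parity column), that column is overloaded and no contradiction arises. This situation is not impossible --- it is precisely the excluded configuration with one of the two columns equal to column $m$ (the paper's appendix treats $\hat{t}_2=m$ separately and shows it is genuinely unrecoverable), so your proof both ``refutes'' a case that actually occurs and then relies on the follow-up claim that at least one partner column is a data column, whose precondition fails exactly there. The repair is one sentence: if the partners all lie in a single column, then the $2r+2$ erasures occupy $r+1$ coded groups and two columns with $r+1$ erasures each, i.e., the excluded configuration, about which the theorem makes no claim; otherwise proceed as you do. With that amendment the argument is correct and matches the paper's proof of the recoverability part of the statement; the paper additionally pushes into the excluded configuration with a rank/determinant analysis, but that goes beyond what the theorem as stated requires.
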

\begin{proof}
See Appendix \ref{app.1}.
\end{proof}

	\begin{corollary}
		\label{col3}
		In $(n,k,m,a=1)$-GSRC, suppose that $2r+2$ symbols are erased, the probability $P$ that we can recover the $2r+2$ erased symbols satisfies 
		\begin{eqnarray}
			P\geq1-\frac{C_{m+1}^{2}\cdot C_{n}^{r+1}}{C_{(m+1)n}^{2r+2}}.\label{eq.4}
		\end{eqnarray}
	\end{corollary}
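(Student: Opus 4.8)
The plan is to derive this as a counting consequence of Theorem~\ref{th2}. Since the code has $(m+1)n$ symbols in total and every $(2r+2)$-subset of them is equally likely to be the erasure pattern, we have $P = 1 - N/C_{(m+1)n}^{2r+2}$, where $N$ is the number of $(2r+2)$-symbol patterns that cannot be recovered. By Theorem~\ref{th2} such an unrecoverable (\emph{bad}) pattern is precisely one whose $2r+2$ erased symbols lie in two columns with $r+1$ in each and together belong to exactly $r+1$ coded groups; so it suffices to show $N = C_{m+1}^{2}\cdot C_{n}^{r+1}$, which immediately gives \eqref{eq.4} (in fact with equality).

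First I would record the structural fact, read off from Eq.~\eqref{eq.3}, that each of the $n$ coded groups $P_{0,0},\ldots,P_{n-1,0}$ contains precisely one symbol in each of the $m+1$ columns (the symbol $x_{<j-t-1>,t}$ in column $t$ for $t<m$, and $p_{j,0}$ in column $m$), and that for a fixed column the correspondence ``coded group $\mapsto$ its symbol in that column'' is a bijection onto the $n$ symbols of that column, because $j\mapsto<j-t-1>$ permutes $\{0,1,\ldots,n-1\}$. Using this I would set up a bijection between bad patterns and pairs $(\{c_1,c_2\},G)$ consisting of a $2$-subset $\{c_1,c_2\}$ of the $m+1$ columns and an $(r+1)$-subset $G$ of the $n$ coded groups: to such a pair associate the erasure deleting, in each of $c_1$ and $c_2$, the symbol that each group of $G$ contributes to that column. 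The structural fact makes this $2r+2$ distinct symbols forming a bad pattern whose support columns are $c_1,c_2$ and whose set of involved coded groups is exactly $G$; conversely a bad pattern determines $\{c_1,c_2\}$ as its support columns, and then $G$ as the set of $r+1$ coded groups hosting its $r+1$ erased symbols in column $c_1$ (distinct symbols lie in distinct groups by the bijection, and no group can carry three erased symbols since the erasures occupy only two columns). Counting the pairs gives $N = C_{m+1}^{2}\cdot C_{n}^{r+1}$.

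The one point that needs care---and that I would treat as the main obstacle---is checking that this enumeration is exact. Over-counting is impossible because different column-pairs produce patterns with different support columns, and for a fixed column-pair different choices of $G$ produce different erased-symbol sets within each column (injectivity of the structural correspondence); under-counting is impossible because the converse direction reconstructs $(\{c_1,c_2\},G)$ from every bad pattern. I would also verify the small sanity conditions that make the binomial coefficients genuine counts: $m\geq 1$ (so there are at least two columns and $C_{(m+1)n}^{2r+2}>0$, using $2r+2\leq 2n\leq (m+1)n$) and $r+1\leq n$ (so $G$ can be chosen), both of which follow from the standing assumptions $n>k$ and $n\geq m+a=m+1$ together with $k,m\geq 1$.
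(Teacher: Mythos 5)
Your counting argument is essentially the paper's (the paper simply observes there are at most $C_{m+1}^{2}\cdot C_{n}^{r+1}$ patterns of the excluded type and divides by $C_{(m+1)n}^{2r+2}$), and your bijection between special-structure patterns and pairs $(\{c_1,c_2\},G)$ is a correct and more careful justification of that count than the paper bothers to give. However, there is one genuine error in your reading of Theorem~\ref{th2}: you treat it as a characterization, writing that an unrecoverable pattern ``is precisely one'' of the special form, and you conclude that \eqref{eq.4} holds with equality. Theorem~\ref{th2} only asserts the forward direction --- every pattern \emph{not} of that form is recoverable --- so being of the special form is merely a \emph{necessary} condition for unrecoverability. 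The paper's proof of Theorem~\ref{th2} (Appendix~\ref{app.1}) makes this explicit: among the special-form patterns (``Case I'' there), recoverability is governed by whether a certain determinant $\det(Q_j)$ vanishes, and the appendix states that some Case I patterns \emph{can} be repaired. Hence $N\leq C_{m+1}^{2}\cdot C_{n}^{r+1}$ rather than $N=C_{m+1}^{2}\cdot C_{n}^{r+1}$, and your claimed equality in \eqref{eq.4} is false in general.

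Because your error overstates $N$, the inequality $P\geq 1-N/C_{(m+1)n}^{2r+2}\geq 1-C_{m+1}^{2}C_{n}^{r+1}/C_{(m+1)n}^{2r+2}$ that the corollary actually asserts still follows, so the gap is repaired simply by replacing ``precisely'' with ``necessarily'' and dropping the parenthetical claim of equality. Everything else --- the uniform-erasure model, the structural fact that each coded group meets each column in exactly one symbol via the permutation $j\mapsto\langle j-t-1\rangle$, and the injectivity/surjectivity checks for the enumeration of special-form patterns --- is sound and goes beyond what the paper writes down.
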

	\begin{proof}
According to Theorem \ref{th2}, the necessary condition for the $2r+2$ erased symbols to be unrecoverable is that the $2r+2$ erased symbols are located in two columns, each column contains $r+1$ erased symbols and they belong to $r+1$ coded groups.
There are at most $C_{m+1}^{2}\cdot C_{n}^{r+1}$ patterns of the $2r+2$ erased symbols that are unrecoverable. We have Eq. \eqref{eq.4}.
	\end{proof}

Combining Theorem \ref{th1} and Theorem \ref{th2}, we know that the fault tolerance of multi-symbol erasures of $(n,k,m,a=1)$-GSRC is $2r+1$. 
According to Corollary \ref{col3}, we have $P\approx1$ if $r\ll n$, which means that most of $2r+2$ erased patterns can be repaired for high code-rate $(n,k,m,a=1)$-GSRC.

\section{Trade-off Between Sub-packetization And Fault Tolerance}
	\label{sec:2.4}
In this section, we present the repair method of multi-node erasures and show that we can recover any $r+a$ erased nodes in $(n,k,m,a)$-GSRC under most high code-rate parameters. In other words, there is a trade-off between sub-packetization $m+a$ and fault tolerance $r+a$.

Suppose that the $r+a'$ nodes $\{f_i\}_{i=1}^{r+a'}$ are erased, where $1\leq a'$ and $0\leq f_1<f_2<\cdots<f_{r+a'}\leq n-1$. We define $r+a'$ {\em interval values} of the $r+a'$ erased nodes as the differences of two consecutive erased nodes, i.e., let $I_j:=f_j-f_{j-1}-1$ for $j=2,3,\cdots,r+a'$ and $I_{r+a'+1}=n-1+f_1-f_{r+a'}$. We can know that 
	\begin{equation}
		\sum_{j=2}^{r+a'+1}I_j=k-a'.\label{eq.5}
	\end{equation}
We first consider the case of $a=1$. Next theorem shows a sufficient condition to recover $r+a'$ erased nodes when $a=1$.
	
\begin{theorem}
\label{th4}
Suppose that $r+a'$ nodes $\{f_j\}_{j=1}^{r+a'}$ of $(n,k,m,a=1)$-GSRC are erased. 
If there are at least $a'$ elements in the set $\{I_2,I_3,\cdots,I_{r+a'+1}\}$ whose values are no less than $m$, then we can recover the $r+a'$ erased nodes.
\end{theorem}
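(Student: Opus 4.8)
The plan is to exhibit an explicit, sequential repair procedure that peels off one erased symbol (or one small block of erased symbols) at a time, exactly mirroring the worked examples in the introduction but for general parameters. Recall that for $a=1$ each coded symbol $p_{j,0}$ together with the $m$ data symbols $x_{<j-1>,0},x_{<j-2>,1},\dots,x_{<j-m>,m-1}$ forms a coded group, and that each data column $\mathbf{X}_i$ is a codeword of an $(n,k,1)$ MDS code, so any $\le r$ erasures in a single column can be repaired once the rest of that column is known. The key observation to set up is the following: if $I_j \ge m$ for some interval between two consecutive erased nodes $f_{j-1}$ and $f_j$, then the block of $m$ non-erased nodes $f_{j-1}+1,\dots,f_{j-1}+m$ (indices mod $n$) is intact, and hence for a suitable choice of the ``anchor'' index the coded group hanging off node $f_{j-1}$ (or the coded symbol in the erased node $f_j$ whose group reaches back only into surviving nodes) has at most one erased member. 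That lets us recover one extra symbol ``for free'' beyond the MDS bound.

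The main steps, in order, would be: (1) Using Eq.~\eqref{eq.5}, $\sum_{j=2}^{r+a'+1} I_j = k-a'$, observe that the hypothesis gives $a'$ distinct ``large'' intervals; associate to each large interval one coded group that, because of the $m$ consecutive surviving nodes, contains exactly one erased symbol, and repair that symbol by downloading the other $m$ symbols of its group. This yields $a'$ recovered symbols, one lying in an otherwise-``over-erased'' column or, more precisely, reducing the effective erasure count. (2) After these $a'$ repairs, argue that in every column the number of still-missing symbols is at most $r$: before the repairs the total was $(r+a')m$ spread over $m$ columns plus the $r+a'$ erased $p$-symbols, and the $a'$ group-repairs are arranged (by choosing the anchors greedily, largest interval first, as in the example where one repairs $x_{3,3}$, then $x_{3,2}$, then $x_{3,1}$, then $x_{3,0}$) so that each column loses enough to drop to $\le r$. (3) Once each data column has $\le r$ erasures, repair all data columns via their MDS property. (4) Finally recompute every erased coded symbol $p_{j,0}$ from Eq.~\eqref{eq.3}. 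Steps (3)--(4) are routine; the content is in (1)--(2).

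The hard part, and where I would spend the most care, is the bookkeeping in step (2): showing that $a'$ well-chosen single-symbol group repairs actually suffice to bring \emph{every} column down to the MDS-correctable range, and that the groups can be chosen with pairwise-distinct erased members so the repairs do not conflict. The subtlety is that the erased nodes are $r+a'$ in number but the intervals only guarantee surviving \emph{nodes}, not surviving \emph{symbols in a given column}; one must track, for the worst-case placement, how the $r+a'$ erasures distribute across the $m$ data columns and the coded column, and verify that each ``large interval'' repair can be targeted at whichever column currently has $>r$ missing symbols. I expect this to go through by an exchange/greedy argument: order the large intervals, and for the $\ell$-th one use the coded group anchored so as to recover a symbol in a column that still has more than $r$ erasures — such a column exists as long as we have not yet finished, because $(r+a')m + (r+a') - a' > rm$ forces some column over the bound until enough repairs are done. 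Formalizing ``such a column exists and the large interval can reach it'' is the crux; everything after it reduces to the MDS decoding already justified in the construction.

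Likely the cleanest writeup defers the delicate counting to an appendix, states the anchor-selection rule explicitly (as a small algorithm paralleling the $(18,16,4,2)$ example), and then verifies by induction on $\ell = 1,\dots,a'$ that after the $\ell$-th group-repair the multiset of per-column erasure counts is dominated by one in which at most $a'-\ell$ columns exceed $r$.
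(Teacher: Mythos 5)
Your toolkit is the right one---large intervals give coded groups with a single erased member, and each data column is $(n,k)$ MDS---but the order in which you deploy it does not work, and the counting in your step (2) is wrong. Every one of the $r+a'$ erased nodes contributes one symbol to \emph{each} of the $m$ data columns, so every data column starts with exactly $r+a'$ erasures and needs $a'$ symbols recovered ``for free'' before its MDS code can finish it. Your plan performs only $a'$ group-repairs in total before invoking MDS on all columns; after those repairs the total erasure count in the data columns is $(r+a')m-a' = rm+a'(m-1)$, so for $m\ge 2$ at least one column still has more than $r$ erasures and your step (3) cannot start. No greedy choice of anchors rescues this: you need $a'$ free repairs \emph{per column}, i.e.\ $a'm$ of them, not $a'$.

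The paper's proof supplies exactly this by interleaving group-repairs with column decoding, processing columns from $m-1$ down to $0$ and using all $a'$ large intervals at every column. For column $m-1$, the group of $p_{<f_{i_j-1}+m>,0}$ lies entirely in the $m$ surviving nodes $f_{i_j-1}+1,\dots,f_{i_j-1}+m$ except for the single symbol $x_{f_{i_j-1},m-1}$, which is recovered; MDS then completes column $m-1$. For column $m-2$, the relevant group of $p_{<f_{i_j-1}+m-1>,0}$ contains $x_{<f_{i_j-1}-1>,m-1}$, which may sit in an \emph{erased} node---it is known only because column $m-1$ was fully decoded in the previous round. This dependency is precisely what your ``all group-repairs first, then all MDS decodes'' ordering breaks, and it is why the $(18,16,4,2)$ example you cite alternates between diagonal repairs and whole-column MDS decodes rather than doing the diagonal repairs up front.
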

	
\begin{proof}
Suppose that the $a'$ elements $\{I_{i_j}\}_{j\in\{1,2,\cdots, a'\}}$ are no less than $m$, where $2\leq i_1<i_2<\cdots<i_{a'}\leq r+a'+1$. We have that nodes $\{<f_{i_j-1}+u>\}_{u=1}^{m}$ are not erased, for all $j\in\{1,2,\cdots,a'\}$, according to the definition of interval value.
		
According to Eq. \eqref{eq.3}, for all $j\in\{1,2,\cdots,a'\}$, we have $$p_{<f_{i_j-1}+m>,0}=x_{f_{i_j-1},m-1}+\sum_{t=0}^{m-2}x_{<f_{i_j-1}+m-t-1>,t},$$ 
where the $m$ symbols $p_{<f_{i_j-1}+m>,0},\{x_{<f_{i_j-1}+m-t-1>,t}\}_{t=0}^{m-2}$ are all in the nodes which are not erased. Therefore, we can download the $m$ symbols to recover the erased symbol $x_{f_{i_j-1},m-1}$. After recovering the $a'$ erased symbols $\{x_{f_{i_j-1},m-1}\}_{j=1}^{a'}$ in column $m-1$, we can repair the other $r$ erased symbols  $\{x_{f_j,m-1}\}_{j\in\{1,2,\cdots,r+a'\}\setminus\{i_1-1,i_2-1,\cdots,i_{a'}-1\}}$ in column $m-1$ by the MDS property.

According to Eq. \eqref{eq.3}, for all $j\in\{1,2,\cdots,a'\}$, we can see that $$p_{
<f_{i_j-1}+m-1>,0}=x_{<f_{i_j-1}-1>,m-1}+x_{f_{i_j-1},m-2}+\sum_{t=0}^{m-3}x_{<f_{i_j-1}+m-t-2>,t}.$$ Recall that all the symbols in column $m-1$ have been repaired, the symbols $\{x_{<f_{i_j-1}+m-t-2>,t}\}_{t=0}^{m-3}$ and $p_{<f_{i_j-1}+m-1>,0}$ are all in the nodes that are not erased. Therefore, we can recover the symbol $x_{f_{i_j-1},m-2}$ by the above equation and the other $r$ erased symbols in column $m-2$ by the MDS property. We can similarly recover all the erased symbols in the first $m-1$ columns and then recover all the erased coded symbols in column $m$ by the encoding procedure. 
\end{proof}
	
When $a'=1$, we show that all $r+1$ erased nodes can be recovered under high code-rate condition.
	
\begin{corollary}
\label{col5}
If $k+r>(r+1)m$, then $(n,k,m,a=1)$-GSRC can repair all $r+1$ erased nodes.
\end{corollary}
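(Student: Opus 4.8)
The plan is to derive Corollary~\ref{col5} as an immediate consequence of Theorem~\ref{th4} applied with $a'=1$. In that case we need only exhibit one interval value $I_j$ (for $j\in\{2,3,\dots,r+2\}$) that is at least $m$; Theorem~\ref{th4} then guarantees that all $r+1$ erased nodes are recoverable. So the whole argument reduces to a pigeonhole/averaging statement about the interval values.

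First I would recall from Eq.~\eqref{eq.5} that, with $a'=1$, the $r+1$ interval values satisfy $\sum_{j=2}^{r+2} I_j = k-1$. Suppose, for contradiction, that every one of these $r+1$ interval values is at most $m-1$. Then their sum is at most $(r+1)(m-1) = (r+1)m - (r+1)$. Combining with the hypothesis $k+r > (r+1)m$, i.e. $k-1 > (r+1)m - (r+1) = (r+1)(m-1)$, we get $\sum_{j=2}^{r+2} I_j = k-1 > (r+1)(m-1)$, which contradicts the bound $\sum_{j=2}^{r+2} I_j \le (r+1)(m-1)$. Hence at least one interval value is $\ge m$.

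Having established that at least one $I_j\ge m$, I would simply invoke Theorem~\ref{th4} with $a'=1$: the hypothesis ``there are at least $a'=1$ elements in $\{I_2,\dots,I_{r+2}\}$ whose values are no less than $m$'' is satisfied, so the $r+1$ erased nodes can be recovered. This completes the proof.

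There is essentially no hard part here: the only thing to be careful about is the bookkeeping with the $\pm 1$ terms — that the number of interval values is exactly $r+1$ (indices $j=2,\dots,r+a'+1$ with $a'=1$), that their sum is $k-a' = k-1$ rather than $k$, and that the hypothesis $k+r>(r+1)m$ is exactly the strict inequality needed to defeat the bound $(r+1)(m-1)$. One could alternatively phrase it without contradiction (the average interval value is $(k-1)/(r+1)$, which exceeds $m-1$ under the hypothesis, so some interval value is at least $m$), but the contradiction form is cleanest to write.
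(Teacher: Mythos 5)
Your proof is correct and follows essentially the same route as the paper's: both reduce the corollary to Theorem~\ref{th4} with $a'=1$ and use the identity $\sum_{j=2}^{r+2}I_j=k-1$ together with the hypothesis $k-1>(r+1)(m-1)$ to force some interval value to be at least $m$ (the paper writes this via $(r+1)\max_j I_j\geq \sum_j I_j$, you phrase it as a contradiction, which is the same averaging argument).
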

\begin{proof}
When $a'=1$, we have $\sum_{j=2}^{r+2}I_j=n-r-1=k-1$. If $k+r>(r+1)m$, then 
		\begin{equation}
			(r+1)\cdot \max(I_2,I_3,\cdots,I_{r+2})\geq\sum_{j=2}^{r+2}I_j=k-1>(m-1)(r+1).\nonumber
		\end{equation}
		Thus, we have $\max(I_2,I_3,\cdots,I_{r+2})\geq m$, which means that there exists at least one element in the set $\{I_2,I_3,\cdots,I_{r+2}\}$ which is no less less $m$, the $r+1$ erased nodes can be recovered according to Theorem \ref{th4}.
	\end{proof}
	
The next theorem shows that the fault tolerance of $(n,k,m,a=1)$-GSRC is $r+1$ under the specific condition.
	
\begin{theorem}
\label{th6}
When $k+r>(r+1)m$, the fault tolerance of $(n,k,m,a=1)$-GSRC is $r+1$.
\end{theorem}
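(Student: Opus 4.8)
The plan is to pin the fault tolerance between $r+1$ and $r+1$. The lower bound is immediate: the hypothesis $k+r>(r+1)m$ is exactly the hypothesis of Corollary~\ref{col5}, so every pattern of $r+1$ erased nodes is recoverable, and the fault tolerance is at least $r+1$. All the work is in exhibiting one pattern of $r+2$ erased nodes that cannot be recovered.

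I would take the $r+2$ consecutive erased nodes $F=\{0,1,\ldots,r+1\}$ (a valid choice since $k+r>(r+1)m\ge r+1$ forces $k\ge 2$, hence $n=k+r\ge r+2$), and show that the code contains a nonzero codeword all of whose nonzero coordinates lie in $F$; since the code is linear, the existence of such a codeword is precisely the obstruction to correcting the erasure at $F$. Such a codeword is determined by the $(r+2)m$ entries $\{x_{f,i}:f\in F,\ 0\le i\le m-1\}$, all other data and coded entries being zero, subject to two families of homogeneous constraints read off from the code definition: (i) for each column $i$, the relation $H_{r,n}\mathbf{X}_i=0$ from the Remark reduces, after dropping the zero entries, to $r$ linear constraints on $(x_{f,i})_{f\in F}$, hence $rm$ in all; and (ii) for each surviving index $j\notin F$, the relation \eqref{eq.3} reduces to $\sum_{t:\,<j-t-1>\in F}x_{<j-t-1>,t}=0$, whereas for $j\in F$ the entry $p_{j,0}$ is free and imposes no constraint on the $x_{f,i}$. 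For consecutive $F$ the relation in (ii) is nontrivial only when $j\in\{1,\ldots,r+m+1\}\pmod n$, i.e.\ for at most $m$ surviving indices $j$ (namely $\{r+2,\ldots,r+m+1\}$ when there is no wraparound). Hence the $(r+2)m$ unknowns obey at most $rm+m=(r+1)m$ linear constraints, so the solution space has dimension at least $m\ge 1$, and any nonzero solution is a nonzero codeword supported inside $F$. Therefore the $r+2$ nodes in $F$ cannot be recovered, so the fault tolerance is less than $r+2$, and combining with Corollary~\ref{col5} it equals $r+1$.

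The delicate point — and the reason a one-line counting argument fails — is that the naive tally pits $rm+n$ parity relations against $(r+2)(m+1)$ unknowns, and the surplus $2m+2-k$ is negative as soon as $k\ge 2m+2$, which the hypothesis permits. The resolution is the two observations used above: the coded symbols $p_{f,0}$ with $f\in F$ are themselves erased unknowns, but each appears in a single relation and can be eliminated at no cost; and among the remaining $n-(r+2)$ coded-symbol relations only at most $m$ actually touch an erased data symbol once the surviving nodes are zeroed, the rest being vacuous. Establishing that this count is at most $m$ (and noting that in the wrap-around regime $k\le m+1$ it is even smaller, namely $k-2\le m-1$) is the one step that needs care, and it is what makes the argument go through uniformly for all admissible parameters.
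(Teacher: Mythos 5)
Your proof is correct, and its achievability half coincides with the paper's: both reduce to Corollary~\ref{col5}. For the converse the two arguments share the same skeleton --- pick a block of $r+2$ consecutive erased nodes and count surviving parity relations against unknowns --- but you pick a different block, which changes the bookkeeping. The paper erases the \emph{last} $r+2$ nodes $\{k-2,k-1,k,\ldots,n-1\}$, so all $r$ MDS-parity rows are gone; each of the first $m$ columns then has $r+2$ erasures against $r$ column checks, leaving $2m$ free data symbols, and only the $m$ surviving row parities $p_{0,0},\ldots,p_{m-1,0}$ touch any erased symbol, so the count collapses to $m<2m$. You erase the \emph{first} $r+2$ nodes, which are all information rows, and therefore must carry the $rm$ column checks explicitly, arriving at $(r+2)m$ unknowns against at most $rm+m$ homogeneous constraints; your accounting of this is right --- the $p_{f,0}$ with $f\in F$ each occur in a single check of \eqref{eq.3} and cost nothing, only $j\in\{r+2,\ldots,r+m+1\}$ (fewer under wraparound) yield nontrivial row constraints, and the resulting kernel vector, completed on $F$ via \eqref{eq.3}, is a nonzero codeword supported in the erased nodes, which for a linear code is exactly the obstruction to decoding. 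What your route buys is a cleaner logical frame (the codeword-in-the-support criterion, and an honest explanation of why the naive global count of $rm+n$ checks versus $(r+2)(m+1)$ unknowns is inconclusive); what the paper's choice of pattern buys is that the count becomes a one-liner. Both converses hold without the hypothesis $k+r>(r+1)m$, which, as in the paper, is used only for the achievability direction.
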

\begin{proof}
According to Corollary \ref{col5}, when $k+r>(r+1)m$, $(n,k,m,a=1)$-GSRC can recover any $r+1$ erased nodes. We only need to show that we can't recover some patterns of $r+2$ erased nodes. 

We show that we can't recover the last $r+2$ nodes $\{i\}_{i=n-r-2}^{n-1}$. Recall that the $m+1$ symbols in the last $r$ nodes are all coded symbols. We only need to show that we can't recover the $2m$ data symbols in nodes $k-1$ and $k-2$. Note that only the $m$ coded surviving symbols in nodes $\{i\}_{i=0}^{m-1}$, namely $\{p_{i,0}\}_{i=0}^{m-1}$, are linear combinations of the erased $2m$ data symbols. Since $m<2m$, we can't recover the erased data symbols and the theorem is proved.
\end{proof}

	\begin{corollary}
		\label{col7}
In $(n,k,m,a=1)$-GSRC, suppose that $r+2$ nodes $\{f_i\}_{i=1}^{r+2}$ are erased, the probability $P$ that we can recover the $r+2$ erased nodes satisfies
		\begin{eqnarray}
			P\geq1-\frac{(r+3)!\cdot\lceil \frac{k-1}{2} \rceil\cdot C_{m+r}^{r+1}}{C_{n}^{r+2}}.\nonumber
		\end{eqnarray}
	\end{corollary}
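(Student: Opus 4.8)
The plan is to use Theorem~\ref{th4} (in the case $a'=2$, since losing $r+2$ nodes means $r+a'=r+2$) as the sufficient condition and then bound the number of patterns that violate it. By Theorem~\ref{th4}, the $r+2$ erased nodes can be recovered as soon as at least two of the $r+2$ interval values $I_2,\dots,I_{r+3}$ are $\ge m$. Hence every pattern that cannot be recovered lies in the set $B$ of erasure patterns having \emph{at most one} interval value $\ge m$, and since $P\ge 1-|B|/C_n^{r+2}$ it suffices to prove $|B|\le (r+3)!\,\lceil\frac{k-1}{2}\rceil\,C_{m+r}^{r+1}$.

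To bound $|B|$ I would linearise each pattern at its long interval. A pattern in $B$ has at least $r+1$ interval values $\le m-1$, and since the $r+2$ intervals are arranged cyclically around the $n$ nodes, these $r+1$ short intervals are consecutive; thus there is a distinguished erased node $e_1$ — the one immediately following the interval that is $\ge m$, or the smallest erased index if all intervals are $\le m-1$ — such that, listing the erased nodes as $e_1,e_2,\dots,e_{r+2}$ in cyclic order starting from $e_1$, the $r+1$ consecutive differences satisfy $d_i:=e_{i+1}-e_i\le m$. The map sending the pattern to $(e_1,d_1,\dots,d_{r+1})\in\{0,\dots,n-1\}\times\{1,\dots,m\}^{r+1}$ is then injective (the partial sums of the $d_i$ stay below $n$, so the tuple recovers the erased set), which already gives $|B|\le n\,m^{r+1}$; equivalently, the $r+2$ erased nodes all fit in a cyclic window of at most $(r+1)m+1$ positions whose placement costs a factor $n$ and whose internal gap pattern is a bounded composition.

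It then remains to check the elementary inequality $n\,m^{r+1}\le (r+3)!\,\lceil\frac{k-1}{2}\rceil\,C_{m+r}^{r+1}$. Using $C_{m+r}^{r+1}=\binom{m+r}{r+1}\ge m^{r+1}/(r+1)!$ together with $(r+3)!/(r+1)!=(r+2)(r+3)$, this reduces to $(r+2)(r+3)\,\lceil\frac{k-1}{2}\rceil\ge n=k+r$, which holds after a short split on whether $r\le k$ or $r>k$ (and treating $k\le 2$ separately), using $(r+2)(r+3)\ge 12$.

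The step I expect to be the real work is the linearisation in the second paragraph: one must pin down the canonical choice of $e_1$ so that the tuple map is genuinely injective on \emph{all} of $B$, in particular covering the degenerate sub-case in which no interval is $\ge m$, so that $B$ is not silently under-counted, and one must check that the partial sums of the $d_i$ never wrap past $n$ so that they really encode $r+2$ distinct nodes. The remaining arithmetic is harmless, and it is worth noting that the stated bound is deliberately loose — it is only informative ($<1$) in the high-rate, small-$r$ regime, exactly as the paper claims — so the factors $(r+3)!$ and $\lceil\frac{k-1}{2}\rceil$ should be read as slack from over-counting node orderings and from the freedom in the length and placement of the long interval, not as a tight enumeration.
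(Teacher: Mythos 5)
Your argument is correct, but it reaches the bound by a genuinely different counting route from the paper's. The paper starts from the same reduction via Theorem~\ref{th4} (a pattern of $r+2$ erased nodes is surely recoverable once two interval values are $\geq m$), but it then splits the wrap-around interval as $I_{r+3}=y_1+y_2$ so that the total number of patterns is exactly the number of non-negative solutions of $I_2+\cdots+I_{r+2}+y_1+y_2=k-2$, i.e.\ $C_{n}^{r+2}$, and it bounds the bad solutions by a stars-and-bars enumeration: $C_{m+r}^{r+1}$ non-decreasing choices of $I_2\leq\cdots\leq I_{r+2}\leq m-1$, at most $\lceil\frac{k-1}{2}\rceil$ choices of $(y_1,y_2)$, and a factor $(r+3)!$ to undo the ordering. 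You instead encode each bad pattern injectively by a distinguished starting node together with the $r+1$ consecutive short gaps, which gives the cleaner intermediate bound $|B|\leq n\,m^{r+1}$; your linearisation is sound, since the canonical choice of $e_1$ covers the degenerate sub-case with no long interval, and the $r+2$ positive gaps sum to $n$, so the partial sums stay below $n$ and the tuple reconstructs the pattern. The remaining arithmetic also checks out: $C_{m+r}^{r+1}\geq m^{r+1}/(r+1)!$ reduces everything to $(r+2)(r+3)\lceil\frac{k-1}{2}\rceil\geq k+r$, which holds in every meaningful instance because $r+2\leq n$ forces $k\geq 2$. The net effect is that you actually establish the strictly stronger estimate $P\geq 1-\frac{n\,m^{r+1}}{C_{n}^{r+2}}$ and obtain the stated inequality as a corollary, whereas the paper's composition-based count is looser but plugs directly into the identity behind Eq.~\eqref{eq.6}.
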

\begin{proof}
Suppose that $r+2$ nodes $\{f_i\}_{i=1}^{r+2}$ are erased. Let $y_1=f_1$ and $y_2=n-1-f_{r+2}$, we have $I_{r+3}=y_1+y_2$ and	
\begin{eqnarray}
\sum_{j=2}^{r+3}I_{j}=(\sum_{j=2}^{r+2}I_{j})+y_1+y_2=k-2,\label{eq.6}
		\end{eqnarray}
according to Eq. \eqref{eq.5}.
		
Note that the number of elements in $\{I_2,I_3,\cdots,I_{r+2},y_{1},y_2\}$ whose values are no less than $m$ is at most one more than the number of elements in $\{I_2,I_3,\cdots,I_{r+2},I_{r+3}\}$ whose values are no less than $m$. If the number of elements in $\{I_2,I_3,\cdots,I_{r+2},y_{1},y_2\}$ whose values are no less than $m$ is no less than three, then we can recover the $r+2$ erased nodes by Theorem \ref{th4}. 
Note that the number of patterns of the $r+2$ erased nodes is equal to the number of non-negative solutions $\{I_2,I_3,\cdots,I_{r+2},y_{1},y_2\}$ of Eq. \eqref{eq.6}. The total number of patterns of the $r+2$ erased nodes is $C_{n}^{r+2}$, we need to calculate the number of non-negative solutions $\{I_2,I_3,\cdots,I_{r+2},y_{1},y_2\}$ of Eq.~\eqref{eq.6} such that the erased $r+2$ nodes may can't be recovered, i.e., the number of non-negative solutions such that the number of elements in $\{I_2,I_3,\cdots,I_{r+2},y_{1},y_2\}$ with at most two values no less than $m$.

Without loss of generality, suppose that the $r+3$ elements $I_2,I_3,\cdots,I_{r+2},y_{1},y_2$ are in increasing order, i.e., $I_2\leq I_3\leq \cdots\leq I_{r+2}\leq y_1\leq y_2$. If $I_{r+2}\geq m$, then $y_2\geq y_1\geq I_{r+2}\geq m$ and we can recover the $r+2$ erased nodes. 
	
Consider that $I_{r+2}\leq m-1$. The number of the non-negative solutions $\{I_2,I_3,\cdots,I_{r+2}\}$ satisfying $0\leq I_2\leq I_3\leq \cdots\leq I_{r+2}\leq m-1$ is equal to the number of solutions $\{I_2,I_3,\cdots,I_{r+2}\}$ satisfying $0< I_2< I_3< \cdots < I_{r+2}< m+r+1$ which is $C_{m+r}^{r+1}$. 
Given the $r+1$ values $\{I_2,I_3,\cdots,I_{r+2}\}$, the number of solutions $\{y_1,y_2\}$ satisfying $y_1+y_2=k-2-(\sum_{j=2}^{r+2}I_{j})$ and $I_{r+2}\leq y_1\leq y_2$ is less than the number of solutions $\{y_1,y_2\}$ such that $y_1+y_2=k-2$ and $0\leq y_1\leq y_2$,
which is upper bounded by $\lceil \frac{k-1}{2} \rceil$.
Therefore, the total number of the non-negative solutions $\{I_2,I_3,\cdots,I_{r+2},y_{1},y_{2}\}$ with arbitrary order and at most two values no less than $m$ is at most $(r+3)!\cdot\lceil \frac{k-1}{2} \rceil\cdot C_{m+r}^{r+1}$ and the result is proved.
	\end{proof}
	
By Corollary \ref{col7}, $(n,k,m,a=1)$-GSRC can recover most patterns of $r+2$ erased nodes, when $\max (r,m)\ll n$.	
In the following, we consider the fault tolerance for general parameter $a\geq 1$. 	

\begin{theorem}
\label{th11}
When $k+r>(r+a)\cdot \max(m,a-1)$, the fault tolerance of $(n,k,m,a)$-GSRC is $r+a$.
\end{theorem}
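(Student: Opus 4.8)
The plan is to prove the two directions separately: that any $r+a$ erased nodes can be repaired, and that some pattern of $r+a+1$ erased nodes cannot be repaired.

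For the achievability direction I would first prove a generalization of Theorem~\ref{th4} valid for arbitrary $a$. Index the erased nodes $f_1<f_2<\cdots<f_{r+a}$ and form the interval values $I_2,\ldots,I_{r+a+1}$ as in Eq.~\eqref{eq.5}, so that $\sum_{j=2}^{r+a+1}I_j=k-a$. The decoder runs in the same ``diagonal, then column-MDS'' style as the worked example of Fig.~\ref{fig:1.(b)}: processing the data columns from $m-1$ down to $0$, it uses Eq.~\eqref{eq:mds2}, which makes each diagonal a codeword of an $(m+a,m)$ MDS code, to recover an erased data symbol whenever at least $m$ of the $m+a$ symbols on its diagonal are already available (surviving, or recovered at an earlier stage); as soon as $a$ erased symbols of a data column have been obtained this way, the $(n,k,1)$ MDS code of that column fills in the remaining $r$ erased symbols; finally every erased parity symbol is recomputed from Eq.~\eqref{eq.1}. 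The content of the lemma is a combinatorial hypothesis on $\{I_j\}$ ensuring this cascade never stalls, namely that an interval, or a run of consecutive intervals, of total length at least $m$ adjacent to the erased set always lets one ``seed'' a fresh erased node, and that collectively the intervals seed $a$ erased nodes in each column. The condition of the theorem is then fed in by a short count: from $\sum_{j=2}^{r+a+1}I_j=k-a$ and $k+r>(r+a)\max(m,a-1)$ one obtains $\sum_{j}\big(I_j-(\max(m,a-1)-1)\big)=(k-a)-(r+a)\big(\max(m,a-1)-1\big)\ge 1$, so the interval values are, in aggregate, long enough to meet the lemma's hypothesis; $\max(m,a-1)$ appears because $m$ is the width required for each diagonal and column MDS step while $a-1$ counts the parity symbols that must be borrowed from the neighbouring interval when an interval is only just long enough.

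For the impossibility direction I would take the last $r+a+1$ nodes $k-a-1,k-a,\ldots,n-1$ as the erased set, mirroring the proof of Theorem~\ref{th6}. The last $r$ of these nodes store only coded symbols, so it suffices to show that the $(a+1)m$ data symbols held in nodes $k-a-1,\ldots,k-1$ cannot all be recovered. By Eq.~\eqref{eq.1}, a surviving parity symbol $p_{j,t}$ involves one of these data symbols $x_{s,i}$ (with $s\in\{k-a-1,\ldots,k-1\}$) only when $j\equiv s+i+1+t$, which forces $j\in\{0,1,\ldots,m+a-2-r\}$; hence only a bounded number of surviving linear equations constrain the $(a+1)m$ unknowns, and comparing counts shows the resulting linear system is underdetermined, so those data symbols, and therefore the whole erasure pattern, are unrecoverable. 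Combining the two directions gives that the fault tolerance is exactly $r+a$.

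The hard part is the achievability, specifically proving that the diagonal/column cascade always terminates. In contrast with the $a=1$ case of Theorem~\ref{th4}, where a single interval of length at least $m$ seeds one erased node and the per-column $(n,k,1)$ codes immediately finish the job, for $a\ge 2$ one must track how the already-recovered columns feed back into later diagonals, verify that every diagonal keeps at least $m$ known symbols throughout, and handle the fact that the parity symbols lying on a given diagonal may belong to the next interval rather than the current one. Turning the informal ``seeding'' picture into a clean combinatorial condition on $\{I_j\}$ that is at once implied by $k+r>(r+a)\max(m,a-1)$ and sufficient for termination is where the real effort goes; the counting in the achievability argument and all of the impossibility argument are routine by comparison.
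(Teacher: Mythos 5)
Your overall architecture (repairability of any $r+a$ erased nodes plus an unrecoverable pattern of $r+a+1$ erased nodes) matches the paper, and your impossibility direction is essentially the paper's Appendix~B: erase $a+1$ consecutive data-bearing nodes together with the $r$ parity nodes and count surviving parity equations against erased data symbols. (Be aware, though, that the raw count is not always underdetermined -- the paper must first peel off the $\Theta(\min(m,a)^2)$ erased data symbols that \emph{are} recoverable via the diagonal codes and only then compare the remaining unknowns with the remaining useful equations; ``comparing counts'' as you state it is not yet a proof.) The genuine gap is in the achievability direction, and it sits exactly where you say the real effort goes. Your inequality $\sum_j\bigl(I_j-(\max(m,a-1)-1)\bigr)\ge 1$ delivers only \emph{one} interval of length at least $\max(m,a-1)$, whereas the cascade you describe needs to seed $a$ erased symbols in each data column before the per-column $(n,k,1)$ code can take over. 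Using that single long interval you can recover only the one erased node adjacent to it; nothing in the hypothesis guarantees a second long interval, so for $a\ge 2$ the cascade as described stalls after one symbol per column, and the ``clean combinatorial condition on $\{I_j\}$'' you would need is never stated, let alone shown to follow from $k+r>(r+a)\max(m,a-1)$.

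The paper closes this gap not with a stronger combinatorial lemma on the intervals but with an induction on $a$. The one guaranteed long interval $I_{a+2}\ge\max(m,a-1)$ is used to recover the $m$ data symbols of the single node $f_{a+1}$ preceding it (itself a delicate two-case argument, $m\le a$ versus $m>a$, interleaving the diagonal $(m+a,m)$ codes with the column $(n,k,1)$ codes). Then the erased parities $p_{f_{a+1},i}$ of that node are \emph{replaced} by the surviving parities $p_{<f_{a+1}+a-1-i>,a-1}$, which are linear combinations of the same $m$ data symbols; deleting the last column yields an $n\times(m+a-1)$ array that is an $(n,k,m,a-1)$-GSRC with only $r+a-1$ erased rows, and the induction hypothesis (whose condition $k+r>(r+a-1)\max(m,a-2)$ is implied by the condition for $a$) finishes the repair. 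Without this reduction -- or an actual proof of your unspecified seeding lemma -- the achievability half of your argument is incomplete.
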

\begin{proof}
We prove this theorem by mathematical induction for $a\geq 1$. When $a=1$, the result is true by Theorem \ref{th6}. 

Suppose that when $k+r>(r+N)\cdot \max(m,N-1)$, the fault tolerance of $(n,k,m,N)$-GSRC is $r+N$, where $N$ is a positive integer. We will show that the fault tolerance of $(n,k,m,N+1)$-GSRC is $r+N+1$ under the condition of $k+r>(r+N+1)\cdot \max(m,N)$.

When $a=N+1\geq2$, suppose that $r+a=r+N+1$ nodes $\{f_j\}_{j=1}^{r+a}$ are erased. We have that
		\begin{equation}
			\max(I_2,I_3,\cdots,I_{r+a+1})\geq\frac{1}{r+a}\cdot\sum_{j=2}^{r+a+1}I_j=\frac{k-a}{r+a}>\max(m,a-1)-1,\nonumber
		\end{equation}
where the above equation comes from Eq. \eqref{eq.5} and the last inequality comes form that $k+r>(r+a)\cdot \max(m,a-1)$.
Therefore, at least one element in $\{I_2,I_3,\cdots,I_{r+a+1}\}$ is no less than $\max(m,a-1)$. Without loss of generality, suppose that $I_{a+2}\geq \max(m,a-1)$ and $0\leq f_1<f_2<\cdots<f_{r+a}\leq n-1$. Then the nodes $\{f_{a+1}+u\}_{u=1}^{\max(m,a-1)}$ are not erased. 

For $i\in\{1,2,\cdots,r+a\}$, let $t_i=f_i-f_1$, then $0=t_1<t_2<\cdots<t_{r+a}<n$.
In the following, we first repair the first $m$ symbols $\{x_{f_{a+1},\delta}\}_{\delta=0}^{m-1}$ in the erased node $f_{a+1}$, by considering two cases: $m\leq a$ and $m>a$, and then repair all the other erased symbols.

\textbf{Repair the first $m$ symbols of node $f_{a+1}$.} Consider the first case $m\leq a$. For each $\delta\in\{0,1,\cdots,m-1\}$, the $m+a$ symbols 
\begin{equation}
\{x_{f_{a+1}+\delta-u,u}\}_{u=0}^{m-1}\cup\{p_{f_{a+1}+\delta+t+1,t}\}_{t=0}^{a-1}
\label{eq:ma_sym}
\end{equation} 
are a codeword of $(m+a,m)$ MDS codes according to Eq. \eqref{eq:mds2}.
If any $m$ symbols in Eq. \eqref{eq:ma_sym} are known, then we can obtain the other symbols.
First, the $\delta$ symbols $\{x_{f_{a+1}+\delta-u,u}\}_{u=0}^{\delta-1}$ are in surviving nodes and are known. Second, for any $ 0\leq \delta\leq m-1$, the $m-\delta$ symbols $\{p_{f_{a+1}+\delta+t+1,t}\}_{t=0}^{m-\delta-1}$ are known when $0\leq t\leq m-\delta-1$, since $1\leq\delta+t+1\leq m\leq a$. Therefore, at least $m$ symbols are known and we can repair the other symbols in Eq. \eqref{eq:ma_sym}. 
For each $\delta=0,1,\cdots,m-1$, take $u=\delta$, we have $x_{f_{a+1},\delta}\in \{x_{f_{a+1}+\delta-u,u}\}_{u=0}^{m-1}$ and therefore, we have recovered the first $m$ symbols $\{x_{f_{a+1},\delta}\}_{\delta=0}^{m-1}$ in node $f_{a+1}$.
			
Consider the second case $m>a$. For $\delta\in\{0,1,\cdots,m-1\}$, the $\delta$ symbols $\{x_{f_{a+1}+\delta-u,u}\}_{u=0}^{\delta-1}$ are in surviving nodes and are known.

If $\delta\geq m-a$, then we have $m-\delta-1\leq a-1$. Since  $1<\delta+t+1\leq m$ for all $t\leq m-\delta-1\leq a-1$, the $m-\delta$ symbols $\{p_{f_{a+1}+\delta+t+1,t}\}_{t=0}^{m-\delta-1}$ are known. Therefore, we can repair all the erased symbols in Eq. \eqref{eq:ma_sym}.
			
If $m-t_{a+1}\leq \delta\leq m-a-1$. For $0\leq t\leq a-1< m-\delta-1$, we have $1<\delta+t+1<m$, the $a$ symbols $\{p_{f_{a+1}+\delta+t+1,t}\}_{t=0}^{a-1}$ are known.
On the other hand, we can see that there are at most $a$ erased symbols in $\{x_{f_{a+1}+\delta-u,u}\}_{u=0}^{m-1}$ since $m-t_{a+1}\leq\delta$. Therefore, we can obtain all erased symbols in Eq. \eqref{eq:ma_sym} for $m-t_{a+1}\leq\delta\leq m-a-1$ and $m>a$, according to the $(m+a,m)$ MDS property.

When $0\leq\delta\leq m-1-t_{a+1}$, the number of erased symbols in $\{x_{f_{a+1}+\delta-u,u}\}_{u=0}^{m-1}$ is larger than $a$. We can't repair the erased symbols in Eq. \eqref{eq:ma_sym}. However, we show that all the erased symbols in column $m-1$ can be repaired as follows. Recall that we have repaired all erased symbols in the  $t_{a+1}$ symbols $\{x_{f_{a+1}-j,m-1}\}_{j=0}^{t_{a+1}-1}$ in column $m-1$ when $m-t_{a+1}\leq\delta\leq m-1$. Specifically, we have repaired $a$ erased symbols $\{x_{f_{i},m-1}\}_{i=2}^{a+1}$ in column $m-1$. And at this time, the other $r$ erased symbols in column $m-1$ can be repaired, since the $n$ symbols in column $m-1$ are a codeword of an $(n,k)$ MDS code.

Once $x_{f_{1},m-1}$ is recovered, we only have $a$ erased symbols in $\{x_{f_{a+1}+m-1-t_{a+1}-u,u}\}_{u=m-1-t_{a+1}}^{m-1}$, and we can obtain all the symbols in Eq. \eqref{eq:ma_sym} for $\delta=m-1-t_{a+1}$
and repair the erased symbol $x_{f_{a+1},m-t_{a+1}-1}$ in node $f_{a+1}$. Similarly, we can repair the $m-t_{a+1}$ erased symbols $\{x_{f_{a+1},u}\}_{u=0}^{m-1-t_{a+1}}$ in sequence
by the $(n,k)$ MDS property of the $n$ symbols in each of the first $m$ columns and the $(m+a,m)$ MDS property of the $m+a$ symbols in Eq. \eqref{eq:ma_sym}.

Up to now, we have repaired the first $m$ symbols $\{x_{f_{a+1},\delta}\}_{\delta=0}^{m-1}$ of the erased node $f_{a+1}$. 

\textbf{Repair the other erased symbols of node $f_{a+1}$.}		According to Eq. \eqref{eq.1} with $j=f_{a+1}$, we have that the coded symbol $p_{f_{a+1},i}$ is linear combination of the $m$ symbols $\{x_{f_{a+1}-t-i-1,t}\}_{t=0}^{m-1}$, where $i=0,1,\cdots,a-2$. Let $j=f_{a+1}+a-\ell-1$ and $i=a-1$ in Eq. \eqref{eq.1}, then the coded symbol $p_{f_{a+1}+a-\ell-1,a-1}$ is linear combination of the $m$ symbols $\{x_{f_{a+1}-t-\ell-1,t}\}_{t=0}^{m-1}$, where $\ell=0,1,\cdots,a-2$. Therefore, for each $i=0,1,\cdots,a-2$, both $p_{f_{a+1},i}$ and $p_{f_{a+1}+a-i-1,a-1}$ are linear combinations of the $m$ symbols $\{x_{f_{a+1}-t-i-1,t}\}_{t=0}^{m-1}$.
		
For $0\leq i\leq a-2$, we have $1\leq a-i-1\leq a-1$, then the node $f_{a+1}+a-i-1$ is not erased and we can obtain the symbol $p_{f_{a+1}+a-i-1,a-1}$. By replacing the erased symbols $p_{f_{a+1},i}$ by the symbol $p_{f_{a+1}+a-i-1,a-1}$ for $0\leq i\leq a-2$, we obtain the $n\times (m+a)$ array such that the $m$ data symbols and the first $a-1$ coded symbols in node $f_{a+1}$ are known. Looking at the first $m+a-1$ columns of the array, i.e.,
\begin{align*}
\begin{bmatrix}
x_{0,0}&x_{0,1}&\cdots &x_{0,m-1}&p_{0,0}&p_{0,1}&\cdots &p_{0,a-2}\\
x_{1,0}&x_{1,1}&\cdots &x_{1,m-1}&p_{1,0}&p_{1,1}&\cdots &p_{1,a-2}\\
\vdots&\vdots &\ddots &\vdots &\vdots &\vdots &\ddots &\vdots \\
x_{f_{a+1},0}&x_{f_{a+1},1}&\cdots &x_{f_{a+1},m-1}&p_{f_{a+1}+a-1,a-1}&p_{f_{a+1}+a-2,a-1}&\cdots &p_{f_{a+1}+1,a-1}\\
\vdots&\vdots &\ddots &\vdots &\vdots &\vdots &\ddots &\vdots \\
x_{n-1,0}&x_{n-1,1}&\cdots &x_{n-1,m-1}&p_{n-1,0}&p_{n-1,1}&\cdots &p_{n-1,a-2}\\
\end{bmatrix},
\end{align*}
the symbols in the $r+a-1$ rows $f_1,f_2,\ldots,f_a,f_{a+2},\ldots,f_{r+a}$ are erased and the other symbols are known. It is sufficient to show that we can repair the erased $r+a-1$ rows from the above $n\times (m+a-1)$ array.

Note that the $n$ symbols in each column of the first $m$ columns of the $n\times (m+a-1)$ array are a codeword of $(n,k)$ MDS code and the $m+a-1$ symbols 
\begin{align*}
&\{x_{<j-i-1>,i}\}_{i=0}^{m-1}\cup\{p_{<j+i>,i}\}_{i=0}^{a-2}\setminus \{p_{f_{a+1},<f_{a+1}-j>}\} \cup\{p_{<j+a-1>,a-1}\} \text{ if } f_{a+1}\in \{<j+i>\}_{i=0}^{a-2},\\
&\{x_{<j-i-1>,i}\}_{i=0}^{m-1}\cup\{p_{<j+i>,i}\}_{i=0}^{a-2} \text{ if } f_{a+1}\not\in \{<j+i>\}_{i=0}^{a-2},
\end{align*}
are a codeword of $(m+a-1,m)$ MDS code. According to the assumption that the fault tolerance of $(n,k,m,a-1)$-GSRC is $r+a-1$, we can repair all the $r+a-1$ erased rows.
Therefore, our $(n,k,m,a=N+1)$-GSRC can repair any $r+N+1$ erased nodes, if $k>(r+a)\cdot max(m,a-1)-r$. 

Suppose that both the first $a+1$ nodes and the last $r$ nodes in $(n,k,m,a)$-GSRC are erased, we can always show that we can not repair the erased $r+a+1$ nodes and thus finish the proof. Please refer to the detailed proof in Appendix \ref{app.2}.
	\end{proof}

Continue the example of $(n,k,m,a)=(18,16,2,1)$ in Fig. \ref{fig:1.(a)}. We can check that the condition in Theorem \ref{th11} holds, we can recover any $r+a=3$ erased nodes. However, we can't recover some four erased nodes. Suppose that the first two nodes and the first two nodes are erased. It is sufficient to recover the erased four data symbols $x_{0,0},x_{0,1},x_{1,0},x_{1,1}$, because the other erased symbols are coded symbols. However, there are only two symbols in the surviving nodes which are linear combinations of the four data symbols. It is impossible to repair the erased four data symbols, and therefore can't repair the erased four nodes.	
	
\section{The Repair Bandwidth for Single-node Failure of $(n,k,m,a)$-GSRC}
\label{sec:2.5}
In this section, we analyze the repair bandwidth of single-node erasure for $(n,k,m,a)$-GSRC. 
We define the {\em average repair bandwidth ratio} as the ratio of the average repair bandwidth of all $n$ nodes to the total number of data symbols. We define the {\em repair locality} as the number of nodes contacted in repairing one single-node erasure.
	
\begin{theorem}
\label{th10}
In $(n,k,m,a)$-GSRC, the average repair bandwidth ratio  is $\frac{m+a}{k}$, the repair locality of each node is $\min(2m+a-1,n-1)$.
\end{theorem}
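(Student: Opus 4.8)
The plan is to give an explicit single-node repair scheme built entirely from the row-group relations \eqref{eq:mds2} and then read off both quantities. For $\ell\in\{0,1,\dots,n-1\}$ call the $m+a$ symbols appearing in \eqref{eq:mds2} with $j$ replaced by $\ell$ — namely $x_{<\ell-1>,0},\dots,x_{<\ell-m>,m-1}$ together with $p_{\ell,0},p_{<\ell+1>,1},\dots,p_{<\ell+a-1>,a-1}$ — the \emph{row group} $G_\ell$. By construction $G_\ell$ is a codeword of an $(m+a,m)$ MDS code with a generator matrix that does not depend on $\ell$, it occupies the $m+a$ consecutive nodes $<\ell-m>,<\ell-m+1>,\dots,<\ell+a-1>$, and each of the $n(m+a)$ code symbols lies in exactly one row group. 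A direct check shows the $m+a$ symbols of node $j$ lie in the $m+a$ distinct (consecutive) row groups $G_{<j-a+1>},\dots,G_{<j-1>},G_j,G_{<j+1>},\dots,G_{<j+m>}$: indeed $p_{j,i}\in G_{<j-i>}$ for $0\le i\le a-1$ and $x_{j,i}\in G_{<j+i+1>}$ for $0\le i\le m-1$.

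For the repair bandwidth I would repair node $j$ one row group at a time: each of these $m+a$ groups has exactly one erased symbol (the one held by node $j$), so downloading any $m$ of the remaining $m+a-1$ symbols of that group — all of which sit in surviving nodes — recovers it by the $(m+a,m)$ MDS property. Since distinct row groups are disjoint as symbol sets, no symbol is downloaded twice, and the total download is exactly $(m+a)\cdot m$ symbols. The row-group structure \eqref{eq:mds2} is invariant under a cyclic shift of the node index, so this count is the same for every node; dividing by the number $km$ of data symbols gives average repair bandwidth ratio $\frac{(m+a)m}{km}=\frac{m+a}{k}$.

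For the locality I would show the $m$ symbols used in each group can be picked from the window $W=\{<j-m>,<j-m+1>,\dots,<j+m+a-1>\}$ of (at most) $2m+a$ consecutive nodes containing $j$. The groups $G_j,G_{<j+1>},\dots,G_{<j+m>}$ lie entirely inside $W$, so recovering $p_{j,0},x_{j,0},\dots,x_{j,m-1}$ contacts only nodes of $W$. For $p_{j,i}$ with $1\le i\le a-1$ the group $G_{<j-i>}$ protrudes from $W$ only on its low end: exactly the $i$ symbols in nodes $<j-i-m>,\dots,<j-m-1>$ fall outside $W$, leaving $m+a-1-i\ge m$ surviving symbols of $G_{<j-i>}$ inside $W$; since only the single symbol $p_{j,i}$ is erased, $m$ of these in-window symbols suffice. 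Hence every contacted node lies in $W\setminus\{j\}$, which has $2m+a-1$ nodes when $2m+a\le n$ and is otherwise all $n-1$ surviving nodes, giving locality $\min(2m+a-1,n-1)$.

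The main obstacle is the locality bound: repairing each of the $m+a$ symbols with its own row group as in the bandwidth argument would contact the full span $<j-m-a+1>,\dots,<j+m+a-1>$, i.e. $2m+2a-2$ nodes, which is $a-1$ too many. The key point that fixes this is the slack in the MDS code of the $a-1$ ``lower'' groups $G_{<j-1>},\dots,G_{<j-a+1>}$: group $G_{<j-i>}$ still has $m+a-1-i\ge m$ surviving in-window symbols precisely because $i\le a-1$, so one can always dodge its out-of-window symbols. Making this counting precise, together with separating off the wrap-around regime $2m+a>n$ (where $W$ is simply all of $\{0,\dots,n-1\}$ and the bound $n-1$ is automatic), is the crux of the proof.
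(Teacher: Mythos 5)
Your proof is correct and follows essentially the same strategy as the paper's: repair each of the $m+a$ symbols of the failed node from the $(m+a,m)$ MDS row group containing it, at a cost of $m$ downloads each, which gives total bandwidth $m(m+a)$, ratio $\frac{m+a}{k}$, and locality $2m+a-1$. The only (immaterial) difference is in the locality bookkeeping: the paper repairs the last $a-1$ parities from the $m$ data symbols of their groups and contacts the window $\{<f-m-a+1>,\dots,<f+m>\}\setminus\{f\}$, whereas you exploit the MDS slack of those lower groups to stay in the mirrored window $\{<f-m>,\dots,<f+m+a-1>\}\setminus\{f\}$; both contain exactly $\min(2m+a-1,n-1)$ nodes.
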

\begin{proof}
Suppose node $f\in\{0,1,\cdots,n-1\}$ is erased.
Recall that the first $m+1$ columns of the $n\times(m+a)$ codeword array is a codeword of $(n,k,m)$-SRC \cite{SRC}.
According to the repair method \cite[Theorem 4]{SRC} of $(n,k,m)$-SRC, we can repair the first $m+1$ symbols in node $f$ by downloading $m(m+1)$ symbols from nodes $\{<f+i>,<f-i>\}_{i=1}^{m}$.
		
The last $a-1$ symbols in node $f$ are $p_{f,i}=\sum_{t=0}^{m-1}\alpha^{it}x_{<f-t-1-i>,t}$, where $i=1,2,\ldots,a-1$. For any $i\in\{1,2,\cdots,a-1\}$ and $t\in\{0,1,\cdots,m-1\}$, we have $2\leq t+i+1\leq m+a-1$ and $<f-t-1-i>\in\{<f-u>\}_{u=2}^{m+a-1}$.
Note that $f\notin\{<f-u>\}_{u=2}^{m+a-1}$. Otherwise, suppose that $f=(f-u)\bmod n$, then $u$ should be a multiple of $n$, which contradicts with $2\leq u\leq m+a-1\leq n-1$. Therefore, we can repair the symbol $p_{f,i}$ by downloading the $m$ symbols $\{x_{<f-t-1-i>,t}\}_{t=0}^{m-1}$, for $i=1,2,\cdots,a-1$.
The repair bandwidth of $(n,k,m,a)$-GSRC is $m(m+1)+m(a-1)=m(m+a)$ symbols and the average repair bandwidth ratio is $\frac{m(m+a)}{mk}=\frac{m+a}{k}$.
		
In our repair method, we repair the symbol $p_{f,i}$ by downloading symbols from nodes $\{<f-t-1-i>\}_{t=0}^{m-1}$, where $i=1,2,\cdots,a-1$. Therefore, the erased node is repaired by downloading symbols from the following nodes
\begin{align*}
&\{<f+i>,<f-i>\}_{i=1}^{m}\bigcup\cup_{i\in\{1,\cdots,a-1\}}(\{<f-t-1-i>\}_{t=0}^{m-1})\\
=&\{<f+i>\}_{i=1}^{m}\cup\{<f-i>\}_{i=1}^{m+a-1},
\end{align*}
and the repair locality is $\min(2m+a-1,n-1)$.
	\end{proof}

Continue the example of $(n,k,m,a)=(18,16,4,2)$ in Fig. \ref{fig:1.(b)}. Suppose that the first node is erased, i.e., $f=0$. According to the repair method in the proof of Theorem \ref{th10}, we can repair the first $m+1=5$ symbols in node $f=0$ by downloading the following $m(m+1)=20$ symbols 
\begin{align*}
&x_{0,0}+x_{17,1}+x_{16,2}+x_{15,3},x_{17,1},x_{16,2},x_{15,3},\\
&x_{1,0}+x_{0,1}+x_{17,2}+x_{16,3},x_{1,0},x_{17,2},x_{16,3},\\
&x_{2,0}+x_{1,1}+x_{0,2}+x_{17,3},x_{2,0},x_{1,1},x_{17,3},\\
&x_{3,0}+x_{2,1}+x_{1,2}+x_{0,3},x_{3,0},x_{2,1},x_{1,2},\\
&x_{17,0},x_{16,1},x_{15,2},x_{14,3},
\end{align*}
from nodes $\{1,2,3,4,14,15,16,17\}$. We can repair the last erased symbol $x_{16,0}+\alpha x_{15,1}+\alpha^2 x_{14,2}+\alpha^3 x_{13,3}$ by downloading the four symbols $x_{16,0}, x_{15,1}, x_{14,2}, x_{13,3}$ from nodes $\{13,14,15,16\}$. Therefore, the repair bandwidth of node $f=0$ is 24 symbols and the repair locality is $\min(2m+a-1,n-1)=9$.

\section{Comparisons}
\label{sec:3}
In this section, we evaluate the performance for our GSRCs and the existing related codes, including SRCs, bundles of RAID array codes and typical LRCs. 
	
\subsection{Comparisons with GSRCs and SRCs}
	\label{sec:3.1}
We summarize the performance of our GSRCs and SRCs in Table \ref{tab:1}.
\begin{table}[htpb] 
\centering
\caption{Comparisons with $(n,k,m,a)$-GSRC and $(n,k,m)$-SRC.}
  \resizebox{\linewidth}{!}{
		\begin{tabular}{|c|c|c|}
			\hline %
			& $(n,k,m,a)$-GSRC & $(n,k,m)$-SRC\\ \hline
			Storage overhead & $\frac{(m+a)n}{mk}$ & $\frac{(m+1)n}{mk}$  \\ \hline
			Sub-packetization & $m+a$ & $m+1$  \\ \hline
			Average repair bandwidth ratio &$\frac{m+a}{k}$  &$\frac{m+1}{k}$	\\ \hline
			Fault tolerance& $n-k+a$ sub. $n>(r+a)\cdot \max(m,a-1)$ & $n-k+1$ sub. $n>(r+1)m$\\ \hline
			Repair locality & $\min(2m+a-1,n-1)$ & $\min(2m,n-1)$\\ \hline
		\end{tabular}
		\label{tab:1}}
	\end{table}
	
From the results in Table. \ref{tab:1}, we can see that the fault tolerance, repair bandwidth, repair locality, sub-packetization and storage overhead of $(n,k,m,a)$-GSRC increases linearly with $a$.

Consider $a$ instances of $(n,k,m)$-SRC and one instance of $(n,k,am,a)$-GSRC such that both codes have the same storage overhead and sub-packetizaiton, while the fault tolerance of $(n,k,am,a)$-GSRC is $n-k+a$, which is larger than that of $(n,k,m)$-SRC. Our GSRCs have better tradeoff between storage overhead and fault tolerance than the existing SRCs.

\subsection{Comparisons with GSRCs and Bundles of RAID Array Codes}
\label{sec:3.15}
Bundles of RAID array codes \cite{Bundle} encode $km$ data symbols into an $n\times (m+1)$ array, the $m+1$ symbols are stored in one node, where  $n=k+1,k+2$ and $m\ll n$.

\begin{table}[htpb]
\centering
\caption{Comparisons with $(n,k,m,a=1)$-GSRC and bundles of RAID array codes.}
    \resizebox{\linewidth}{!}{
	\begin{tabular}{|c|c|c|c|c|}
		\hline
&Codes \cite{Bundle} with $n=k+1$ &GSRCs with $n=k+1$ &Codes \cite{Bundle} with $n=k+2$&GSRCs with $n=k+2$\\
		\hline
        Storage overhead&$\frac{(m+1)n}{m(n-1)}$&$\frac{(m+1)n}{m(n-1)}$&$\frac{(m+1)n}{m(n-2)}$&$\frac{(m+1)n}{m(n-2)}$\\
		\hline
		Repair locality&$n-1$&$2m$&$n-1$&$2m$\\
		\hline
		Average repair bandwidth ratio&1&$\frac{m+1}{n-1}$&1&$\frac{m+1}{n-2}$\\
		\hline
		Fault tolerance &1&2&2&3\\
		\hline
		
\end{tabular}\label{tab:1.5}}
\end{table}

Table \ref{tab:1.5} summarizes the performance of our GSRCs and bundles of RAID array codes, where the sub-packetization of both codes is $m+1$. 
According to Table \ref{tab:1.5}, we can observe that our GSRCs have better performance compared with bundles of RAID array codes, in terms of repair bandwidth, repair locality and fault tolerance.

\subsection{Comparisons with GSRCs and LRCs}
\label{sec:3.2}
In the following, we evaluate the performance of our codes, optimal-LRCs \cite{2014Locally}, and locally MSR PMDS codes \cite{2021PMDS}, in terms of repair bandwidth, storage overhead, repair locality and fault tolerance. 
	
We review the construction of optimal-LRCs and locally MSR PMDS codes.
We have $k'+r'$ nodes and each node stores $\alpha$ symbols, where $k',r',\alpha$ are all positive integer. We need to encode $\alpha(k'g-s)$ data symbols to obtain $(k'+r')\alpha$ symbols that are stored in $k'+r'$ nodes, where $s$ is positive integer.
We first encode all $\alpha(k'g-s)$ data symbols to obtain $s\alpha$ global coded symbols, divides the $k'g\alpha$ symbols into $g$ groups each group with $k'\alpha$ symbols. We then encode the $k'\alpha$ symbols in each group to obtain $(k'+r')\alpha$ symbols by employing an $(k'+r',k',\alpha)$ MDS array code, where the obtained $(k'+r')\alpha$ symbols are stored in $k'+r'$ nodes.
Optimal-LRCs are the above codes with $\alpha=1$ and locally MSR PMDS codes are the above codes by choosing the $(k'+r',k',\alpha)$ MDS array code to be an $(k'+r',k',\alpha)$ MSR code.
In locally MSR PMDS codes, any single-node erasure can be locally repaired by the $(k'+r',k',\alpha)$ MSR code, the repair bandwidth is $\frac{(k'+r'-1)\alpha}{r'}$ symbols and the repair locality is $k'+r'-1$. Recall that the fault tolerance of an $(g,k',r',s,\alpha)$ optimal-LRC is $r'+s$.

The next theorem shows that our codes have less storage overhead and less repair bandwidth than that of optimal-LRCs under the same fault tolerance.

\begin{theorem}
		\label{th12}
Suppose that $g\cdot\frac{k'+1}{k'}>s+1$. Our $(n=g(k'+1),k=gk'-s+g,m=k',a=1)$-GSRC have the same fault tolerance, however have strictly less storage overhead and less repair bandwidth,
compared with $(g,k',r'=1,s,\alpha=1)$ optimal-LRC codes.
\end{theorem}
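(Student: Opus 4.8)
The plan is to directly compare the two codes parameter-by-parameter. Set $r' = 1$, so the optimal-LRC has fault tolerance $r' + s = s + 1$, and our GSRC has $n - k + a = g(k'+1) - (gk' - s + g) + 1 = s + 1$ as well; so the fault-tolerance claim will follow once I verify that $(n,k,m,a=1)$-GSRC with these parameters actually attains its nominal fault tolerance $r + 1$. For that I would invoke Theorem~\ref{th6}: I must check the hypothesis $k + r > (r+1)m$, i.e. $(gk' - s + g) + s > (s+1)k'$, which simplifies to $g(k'+1) > (s+1)k'$, i.e. $g\cdot\frac{k'+1}{k'} > s+1$ --- exactly the stated assumption. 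So the fault tolerances match.

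Next I would compare storage overhead. The optimal-LRC stores $(k' + 1)g$ symbols encoding $g k' - s$ data symbols ($\alpha = 1$, $r' = 1$), so its overhead is $\frac{g(k'+1)}{gk' - s}$. Our GSRC has overhead $\frac{(m+a)n}{mk} = \frac{(k'+1)\,g(k'+1)}{k'(gk' - s + g)}$. To show the GSRC overhead is strictly smaller I would cross-multiply and reduce: the inequality $\frac{(k'+1)g(k'+1)}{k'(gk'-s+g)} < \frac{g(k'+1)}{gk'-s}$ is equivalent, after cancelling $g(k'+1) > 0$, to $(k'+1)(gk'-s) < k'(gk'-s+g)$, i.e. $(k'+1)(gk'-s) < k'(gk'-s) + k'g$, i.e. $gk' - s < k'g$, i.e. $-s < 0$, which holds since $s \geq 1$. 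So the overhead is strictly less, unconditionally (given $s \geq 1$).

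Finally I would compare repair bandwidth for single-node failure. For the optimal-LRC with $\alpha = 1$, $r' = 1$: a single-node erasure is repaired within its group of $k' + 1$ nodes by downloading the other $k'$ symbols, so the average repair bandwidth ratio is $\frac{gk'}{gk' - s}$ (it is $\ge 1$). By Theorem~\ref{th10}, our GSRC has average repair bandwidth ratio $\frac{m+a}{k} = \frac{k'+1}{gk' - s + g} = \frac{k'+1}{g(k'+1) - s} < 1$. Since the GSRC ratio is below $1$ and the LRC ratio is at least $1$, the GSRC strictly wins; to be fully rigorous I would just note $\frac{k'+1}{g(k'+1)-s} \le \frac{k'+1}{(k'+1) - s} \cdot \frac{1}{?}$ --- actually cleaner is $g(k'+1) - s \ge gk' - s + k' \ge gk' - s + 1 > gk' - s$ when $k' \ge 1$, combined with $k' + 1 \le gk'$ when $g \ge 2$ (and the edge case $g = 1$ is excluded because then $g\frac{k'+1}{k'} = \frac{k'+1}{k'} \le 2 \le s+1$ forces $s \le 1$, a degenerate boundary I would handle separately or note is vacuous). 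Assembling the three comparisons completes the proof.

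The main obstacle I anticipate is purely bookkeeping: making sure the parameter substitution $n = g(k'+1)$, $k = gk' - s + g$, $m = k'$, $a = 1$, $r = n - k = s + 1$ is carried through consistently in every formula, and correctly pinning down the LRC's own repair bandwidth and overhead from its construction (the excerpt describes locally-MSR PMDS codes in detail but states the plain optimal-LRC only as the $\alpha = 1$ specialization, so I would re-derive its single-node repair cost as "read the remaining $k'$ symbols in the local group"). There is no deep idea here --- the content is that the GSRC's row-direction coding buys an extra useful parity "for free" relative to the LRC's global parities, and the inequalities all collapse to $s \ge 1$ or to the stated hypothesis $g\frac{k'+1}{k'} > s+1$.
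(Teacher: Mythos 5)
Your treatment of the fault tolerance and the storage overhead is correct and is essentially the paper's own argument: the fault tolerance follows from checking $k+r>(r+1)m$, which under the substitution $r=s$, $m=k'$, $k+r=g(k'+1)$ is exactly the hypothesis $g\cdot\frac{k'+1}{k'}>s+1$ (the paper invokes Theorem~\ref{th11} rather than Theorem~\ref{th6}, but for $a=1$ these give the same condition), and the storage-overhead inequality cross-multiplies down to $s>0$ exactly as you wrote.

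The repair-bandwidth comparison, however, contains a genuine error. With $r'=1$ and $\alpha=1$, a single-node repair in the optimal-LRC downloads the other $k'$ symbols of its local group, and the total number of data symbols is $gk'-s$, so the LRC's average repair bandwidth ratio is $\frac{k'}{gk'-s}$, not $\frac{gk'}{gk'-s}$ as you wrote; this is precisely the value the paper computes as $\frac{(k'+r'-1)\alpha}{r'}\cdot\frac{1}{(gk'-s)\alpha}$. Your argument then rests entirely on the false premise that the LRC ratio is at least $1$: the correct ratio $\frac{k'}{gk'-s}$ is typically well below $1$ once $g\geq 2$, so showing that the GSRC ratio $\frac{k'+1}{gk'-s+g}$ is below $1$ proves nothing. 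The comparison that actually works is the same cross-multiplication you already used for the storage overhead: $\frac{k'+1}{gk'-s+g}<\frac{k'}{gk'-s}$ holds iff $(k'+1)(gk'-s)<k'(gk'-s+g)$ iff $s>0$. With that fix the unfinished chain of inequalities and the $g=1$ edge-case discussion become unnecessary.
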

\begin{proof}
By assumption, we have that $k'(s+1)<g(k'+1)$, $n=g(k'+1)$, $k=gk'-s+g$ and $m=k'$, then we can obtain that
\begin{align*}
(r+1)m-r=k'(s+1)-s<g(k'+1)-s=k.
\end{align*}
By Theorem \ref{th11}, the fault tolerance of our codes is $r+1=s+1$, which is equal to the fault tolerance of $(g,k',r'=1,s,\alpha=1)$ optimal-LRC code.

When $m=k'$, the storage overhead of our codes is $\frac{(m+a)\cdot n}{mk}=\frac{(k'+1)n}{k'(gk'-s+g)}$,
and the storage overhead of $(g,k',r'=1,s,\alpha=1)$ optimal-LRC code is $\frac{n}{gk'-s}$. We have that
		\begin{align*}
			\frac{(k'+1)n}{k'(gk'-s+g)}<\frac{n}{gk'-s}\Leftrightarrow (gk'-s)(k'+1)<k'(gk'-s+g)\Leftrightarrow 0<s.
		\end{align*}
Therefore, our codes have strictly less storage overhead than that of		$(g,k',r'=1,s,\alpha=1)$ optimal-LRC code.

By Theorem \ref{th10}, the average repair bandwidth ratio of our codes is $\frac{m+a}{k}$. Recall that the average repair bandwidth of 
$(g,k',r'=1,s,\alpha=1)$ optimal-LRC code is $\frac{(k'+r'-1)\alpha}{r'}\cdot\frac{1}{(gk'-s)\alpha}=\frac{k'}{gk'-s}$. We have that
		\begin{align*}
			\frac{m+a}{k}<\frac{k'}{gk'-s}\Leftrightarrow\frac{k'+1}{gk'-s+g}<\frac{k'}{gk'-s}\Leftrightarrow (gk'-s)(k'+1)<k'(gk'-s+g)\Leftrightarrow 0<s.
		\end{align*}
Therefore, our codes have strictly less repair bandwidth than that of $(g,k',r'=1,s,\alpha=1)$ optimal-LRC code. 
		
	\end{proof}
The next theorem shows that our codes have better performance than that of locally MSR PMDS codes \cite{2021PMDS}, in terms of sub-packetization level, fault tolerance and repair locality.
\begin{theorem}
\label{th13}
Suppose that  $r'\geq2$, $1\leq s\leq k'$, $g\geq2$ and $\alpha=r'^{k'+r'-1}$ in locally MSR PMDS codes	\cite{2021PMDS}.
Let $n=g(k'+r')$, $k=n-r'-s$ and $m=\frac{a(gk'-s)}{gr'-r'}$ in our codes, where $2a\leq r'$. If $g\geq \max(\frac{2k'}{r'}+1,\frac{r'+s+a}{2})$, then our codes have better performance as follows.	\begin{enumerate}
\item[1)] Our codes have the same storage overhead as locally MSR PMDS codes.
\item [2)] The sub-packetization of our codes is lower than $a(\frac{3}{2}+\frac{k'}{r'})$.
\item[3)] The fault tolerance of our codes 
and locally MSR PMDS codes are $r'+s+a$ and $r'+s$, respectively.
\item[4)] Our codes have smaller repair locality than that of of locally MSR PMDS codes.
\end{enumerate}
\end{theorem}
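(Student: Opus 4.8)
The plan is to push everything through the parameter dictionary forced by the hypotheses. Writing $r:=n-k$, the choices $n=g(k'+r')$ and $k=n-r'-s$ give $r=r'+s$ and $k=gk'+r'(g-1)-s$, while $m=\frac{a(gk'-s)}{r'(g-1)}$. All four claims then become elementary inequalities in $g,k',r',s,a$, and the two quantitative hypotheses on $g$ play distinct roles: $g\geq\frac{2k'}{r'}+1$ is the same as $r'(g-1)\geq 2k'$, which together with $s\geq 1$ yields the inequality $2(k'-s)<r'(g-1)$ that will recur in claims (2) and (4); and $g\geq\frac{r'+s+a}{2}$ says exactly $r+a=r'+s+a\leq 2g$, which is what claim (3) needs. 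Besides these, I will use Theorem \ref{th10} for the GSRC repair locality, Theorem \ref{th11} for its fault tolerance, and the facts recalled just before the theorem for the locally MSR PMDS code, namely storage overhead $\frac{n}{gk'-s}$ (its $n\alpha$ stored symbols divided by its $(gk'-s)\alpha$ data symbols), fault tolerance $r'+s$ (it is an $(g,k',r',s,\alpha)$ optimal-LRC), and repair locality $k'+r'-1$.

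For claim (1) I compute the GSRC overhead $\frac{(m+a)n}{mk}$ and show it equals $\frac{n}{gk'-s}$: since $m+a=\frac{a\,(gk'-s+r'(g-1))}{r'(g-1)}$ and $m=\frac{a(gk'-s)}{r'(g-1)}$, after cancelling the common factor this is precisely the identity $gk'-s+r'(g-1)=k$, our closed form for $k$. For claim (2) the sub-packetization is $m+a=a\left(1+\frac{gk'-s}{r'(g-1)}\right)$, and the desired bound $m+a<a\left(\frac{3}{2}+\frac{k'}{r'}\right)$ is, after clearing denominators, the recurring inequality $2(k'-s)<r'(g-1)$. (In particular $m+a<\frac{3r'}{4}+\frac{k'}{2}<k'+r'\leq n$ using $a\leq r'/2$, so the GSRC parameters are admissible.) Claim (4) is similar: by Theorem \ref{th10} the GSRC repair locality is $\min(2m+a-1,n-1)\leq 2m+a-1$, and using $2a\leq r'$ one gets $2m+a\leq\frac{r'}{2}+\frac{gk'-s}{g-1}$, which is $<k'+r'$ iff, once more, $2(k'-s)<r'(g-1)$; comparing with the recalled locality $k'+r'-1$ of the locally MSR PMDS code finishes it.

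The substantive point is claim (3). I apply Theorem \ref{th11}, whose hypothesis is $k+r>(r+a)\max(m,a-1)$. Here $k+r=n=g(k'+r')$ and $r+a\leq 2g$ from $g\geq\frac{r'+s+a}{2}$, so it suffices to bound $(r+a)(a-1)$ and $(r+a)m$ separately by something strictly less than $n$, whichever of $m$ and $a-1$ realizes the maximum. For the first, $a\leq r'/2$ gives $(r+a)(a-1)\leq 2g\left(\frac{r'}{2}-1\right)=g(r'-2)<g(k'+r')=n$. For the second, $2a\leq r'$ gives $(r+a)m\leq 2g\cdot\frac{(r'/2)(gk'-s)}{r'(g-1)}=\frac{g(gk'-s)}{g-1}$, so it remains to verify $(k'+r')(g-1)>gk'-s$, i.e. $r'(g-1)>k'-s$, which holds since $r'(g-1)\geq 2k'>k'>k'-s$. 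Hence Theorem \ref{th11} applies and the GSRC tolerates $r+a=r'+s+a$ erased nodes, while the locally MSR PMDS code tolerates $r'+s$; this is claim (3).

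The main obstacle is keeping claim (3) clean: one must recognize that Theorem \ref{th11} is the right tool only after re-deriving $r=r'+s$ and the closed form of $k$, split on which term realizes $\max(m,a-1)$, and notice that the hypothesis $2a\leq r'$ is exactly what cancels $r'$ in the $m$-branch and reduces it to the easy $r'(g-1)>k'-s$. Everything else is bookkeeping, driven by the single inequality $2(k'-s)<r'(g-1)$, which is just a restatement of $g\geq\frac{2k'}{r'}+1$ together with $s\geq 1$.
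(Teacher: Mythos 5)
Your proposal is correct and follows essentially the same route as the paper: the same parameter dictionary ($r=r'+s$, $k=gk'+r'(g-1)-s$), the same appeal to Theorems \ref{th10} and \ref{th11}, and the same two-branch verification of $k+r>(r+a)\max(m,a-1)$ via $2a\leq r'$ and $r+a\leq 2g$. The only differences are cosmetic (you organize claims 2 and 4 around the single inequality $2(k'-s)<r'(g-1)$ where the paper drops the $-s$ earlier, and you use the stated hypothesis $g\geq\frac{r'+s+a}{2}$ directly where the paper's write-up momentarily invokes the slightly stronger $2(g-1)\geq r+a$).
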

\begin{proof}
\begin{enumerate}
\item[1)] The storage overhead of our codes is $$\frac{(m+a)n}{mk}=\frac{n}{k}(1+\frac{a}{m})=\frac{n}{k}(1+\frac{gr'-r'}{gk'-s})=\frac{n}{k}\cdot\frac{gk'+gr'-r'-s}{gk'-s}=\frac{n}{gk'-s},$$ 
which is the same as that of locally MSR PMDS codes.
\item [2)] Since  $g\geq\frac{2k'}{r'}+1$, i.e., $(g-1)r'\geq 2k'$, we have
			\begin{align*}
				m=\frac{a(gk'-s)}{gr'-r'}<\frac{agk'}{gr'-r'}=a(\frac{k'}{r'}+\frac{k'}{(g-1)r'})\leq a(\frac{k'}{r'}+\frac{1}{2}).
			\end{align*}
We can obtain that  $m+a<a(\frac{3}{2}+\frac{k'}{r'})$.
\item[3)] Since $g\geq\frac{r'+s+a}{2}+1=\frac{r+a}{2}+1$, i.e., $2(g-1)\geq r+a$ and $2a\leq r'$ by the assumption, we have	\begin{align*}
k-((r+a)m-r)&=(k+r)-(r+a)m
        \nonumber\\
        &=g(k'+r')-(r+a)\cdot\frac{a(gk'-s)}{(g-1)r'}\nonumber
				\\ 
            &\geq g(k'+r')-(r+a)\cdot\frac{gk'}{2(g-1)}
            \nonumber\\
            &=\frac{g}{2(g-1)}\cdot(2(g-1)(k'+r')-(r+a)k')\\
&\geq \frac{gr'(r+a)}{2(g-1)}>0.\nonumber
			\end{align*}
We can also obtain that	\begin{align*}
k+r-((r+a)(a-1))&=g(k'+r')-(r+a)(a-1)\nonumber\\
                &> \frac{r+a}{2}\cdot(k'+r')-(r+a)(a-1)
				\nonumber\\
            &=(r+a)(\frac{k'}{2}+1+(\frac{r'}{2}-a))\\
            &\geq (r+a)\cdot(\frac{k'}{2}+1)>0.\nonumber
			\end{align*}
Therefore, we have $k+r>(r+a)\cdot \max(m,a-1)$, the fault tolerance of our codes is $n-k+a=r'+s+a$ according to Theorem \ref{th11}.
The fault tolerance of locally MSR PMDS codes	\cite{2021PMDS} is $r'+s$.
\item[4)] 
By Theorem \ref{th10}, the repair locality of our codes is $2m+a-1$. The repair locality of locally MSR PMDS codes	\cite{2021PMDS} is $k'+r'-1$. We can compute that
\begin{align*}
(2m+a-1)-(k'+r'-1)&=2\cdot\frac{a(gk'-s)}{(g-1)r'}+a-1-(k'+r'-1)\\
&\leq\frac{gk'-s}{g-1}+\frac{r'}{2}-1-(k'+r'-1)\\
&<\frac{gk'}{g-1}+\frac{r'}{2}-1-(k'+r'-1)\\
&=\frac{2k'-(g-1)r'}{2(g-1)}\\
&\leq 0,		\end{align*}
where the first equality comes from the assumption that $m=\frac{a(gk'-s)}{gr'-r'}$, the first inequality comes from that $a\leq\frac{r'}{2}$, and the last inequality comes from the assumption that $g\geq \max(\frac{2k'}{r'}+1,\frac{r'+s+a}{2})$.
Therefore, the repair locality of our codes is smaller than that of locally MSR PMDS codes	\cite{2021PMDS}.
		\end{enumerate}
			\end{proof}
	
According to Theorem \ref{th13}, our codes have smaller repair locality, larger fault tolerance and much lower sub-packetization level, compared with locally MSR PMDS codes \cite{2021PMDS}, under most high code-rate parameters. Table \ref{tab:2.5} shows some supported values of parameters $(n,k,m,a)$ of Theorem \ref{th13}.

\begin{table}[htpb]
	\centering
	\caption{Some supported parameters $(n,k,m,a)$ of Theorem \ref{th13}}
		\begin{tabular}{|c|c|c|c|}
			\hline
			&$n$ &$k$ &$m$\\
			\hline
		$a=1$&$96$&$81, 82, \ldots, 93$&$4$\\
			\hline
		$a=2$&$112$&$100,101,\ldots,107$&$7$\\
			\hline
			$a=3$&$300$&$283,284,\ldots,293$&$14$\\
			\hline
			
		\end{tabular}\label{tab:2.5}
\end{table}	
	\section{Conclusion}
	\label{sec:4}
	
In this paper, we present the construction of GSRCs which generalize the SRCs. We deduce the fault tolerance for our GSRCs and show that there is a trade-off between sub-packetization and fault tolerance. We show that our codes have better performance than the existing related codes, such as LRCs, in terms of repair bandwidth, fault tolerance and repair locality. The implementation of our codes in distributed storage systems is one of our future work.

        \begin{appendices}

        \section{Proof of Theorem \ref{th2}}
        \label{app.1}
Suppose that $2r+2$ symbols are erased and the number of erased symbols in column $i$ is $t_i$ for $i\in\{0,1,\cdots,m\}$. We have $\sum_{i=0}^{m}t_i=2r+2$.

If $t_i\leq r$ for all $i\in\{0,1,\cdots,m-1\}$, we can directly repair $\sum_{i=0}^{m-1}t_i$ erased symbols in the first $m$ columns and then repair the erased $t_{m}$ symbols in column $m$ by Eq. \eqref{eq.3}.

If there is a certain $\ell\in\{0,1,\cdots,m-1\}$ with $t_{\ell}>r$. Suppose that $t_{\ell}=r+a'$, where $1\leq a'\leq r+2$. We have $\sum_{i=0,i\neq \ell}^{m}t_i=r+2-a'$. Since $1\leq a'$, we have $r+a'\geq r+2-a'$ and at least $(r+a')-(r+2-a')=2a'-2$ erased symbols in column $\ell$ which are in the coded group such that all the other symbols in this coded group are not erased. Therefore, we can repair these $2a'-2$ erased symbols in column $\ell$.

If $a'\geq2$, we have $2a'-2\geq 2$. After repairing the $2a'-2$ erased symbols in column $\ell$, we can repair the other $$(2r+2)-(2a'-2)=2r+4-2a'<2r+1$$ erased symbols according to Theorem \ref{th1}. 
If $a'=1$, we have $t_{\ell}=r+1$, where $\ell\in\{0,1,\cdots,m-1\}$. If the other $r+1$ erased symbols are not in one column, then there is at least one $\ell'(\neq\ell)\in\{0,1,\cdots,m-1\}$, so that $1\leq t_{\ell'}\leq r$, and these $t_{\ell'}$ symbols can be repaired according to $(n,k)$ MDS property,  then we can repair the other no more than $2r+1$ erased symbols according to Theorem \ref{th1}.
Therefore, we can repair the erased $2r+2$ symbols, except that \textbf{Case I:} the erased $2r+2$ symbols belongs to $r+1$ coded groups and they are located in two columns, each column contains $r+1$ erased symbols.

In the following, we show that we can repair some patterns of \textbf{Case I}, while can't repair the other patterns of \textbf{Case I}. First, we present some notations.
In our $(n,k,m,a=1)$-GSRC, we represent $n\times (m+1)$ symbols by a column vector of length $n(m+1)$,
$$\mathbf{X}:=(\mathbf{X}_0^T,\mathbf{X}_1^T,\cdots,\mathbf{X}_{m-1}^T,\mathbf{P}_0^T)^T,$$
where $\mathbf{X}_i$ denotes the $n$ symbols in column $i$ with $i=0,1,\ldots,m-1$ and $\mathbf{P}_0$ denotes the $n$ symbols in column $m$. 

Given a matrix $H$, we denote the entry in row $j$ and column $i$ as $H(j,i)$. 
According to Eq. \eqref{eq.3}, we have $n$ linear equations
$$H_{0}\cdot\mathbf{X}=0,$$ where $H_{0}$ is an $n\times (m+1)n$ sparse matrix, in which all the entries are 0 except that the following entries are 1,
		\begin{eqnarray}
			&&\forall j\in\{1,2,\cdots,n\}, H_{0}(j,mn+j)=1,\nonumber\\
			&&\forall j\in\{1,2,\cdots,n\},t\in\{1,2,\cdots,m\}, H_{0}(j,<j-1-t>+(t-1)n+1)=1.\nonumber
		\end{eqnarray} 
Together with the $r=n-k$ linear constraints in each of the first $m$ columns, we can obtain $n+rm$ linear equations $H_{n,r,m}\cdot\mathbf{X}=0$, $H_{n,r,m}$ is the following $(n+rm)\times((m+1)n)$ parity-check matrix, 
		\begin{eqnarray}
			H_{n,r,m}=\begin{pmatrix}
				H_{r,n} & \mathbf{0}_{r,n} & \cdots & \mathbf{0}_{r,n} &\mathbf{0}_{r,n}\\
				\mathbf{0}_{r,n} & H_{r,n} & \cdots & \mathbf{0}_{r,n}&\mathbf{0}_{r,n}\\
				\vdots & \vdots& \ddots& \vdots&\vdots\\
				\mathbf{0}_{r,n}&\mathbf{0}_{r,n}& \cdots & H_{r,n} &\mathbf{0}_{r,n}\\
				\hline 
				&&H_{0}&&\\ 
			\end{pmatrix},
   \label{eq:parity}
		\end{eqnarray}
		where $\mathbf{0}_{r,n}$ is the $r\times n$ zero matrix and $H_{r,n}$ is in Eq. \eqref{eq.2}. 

Consider \textbf{Case I}.
Suppose that $r\geq2$ and $2r+2$ erased symbols are in columns $\hat{t}_1$ and $\hat{t}_2$, where $0\leq \hat{t}_1<\hat{t}_2\leq m$. We will show that we can repair the erased symbols if $\hat{t}_2\leq m-1$ under some conditions and can't repair the erased symbols if $\hat{t}_2= m$. 

We first consider that $\hat{t}_2\leq m-1$. Suppose that the $r+1$ erased symbols located in column $\hat{t}_1$ are $\{x_{l_i,\hat{t}_1}\}_{i=1}^{r+1}$, where $0\leq l_1<l_2<\cdots<l_{r+1}\leq n-1$. Since they belong to $r+1$ coded groups, the $r+1$ erased symbols in column $\hat{t}_2$ must be $\{x_{<l_i+\hat{t}_1-\hat{t}_2>,\hat{t}_2}\}_{i=1}^{r+1}$. 
Here we assume that $\{l_{i}'\}_{i=1}^{r+1}$ is in ascending order of $\{<l_i+\hat{t}_1-\hat{t}_2>\}_{i=1}^{r+1}$, i.e., $\{l_{i}'\}_{i=1}^{r+1}=\{<l_i+\hat{t}_1-\hat{t}_2>\}_{i=1}^{r+1}$ and $0\leq l_{1}'<l_{2}'<\cdots<l_{r+1}'\leq n-1$. According to the parity-check matrix in Eq. \eqref{eq:parity}, the erased symbols can be repaired if and only if the rank of the following matrix $M$ is $2r+2$.
		\begin{eqnarray}
	M=		\begin{pmatrix}
				A&\mathbf{0}_{r,r+1}\\
				\mathbf{0}_{r,r+1}&B'\\
				P_{\hat{t}_1}'&P_{\hat{t}_2}'\\
			\end{pmatrix},\nonumber
		\end{eqnarray} where $P_{\hat{t}_1}'$ and $P_{\hat{t}_2}'$ are $(r+1)\times(r+1)$ permutation matrix, 
  $A$ is the $r\times(r+1)$ matrix
		\begin{eqnarray}
A=\begin{pmatrix}
				1 & 1 &\cdots&1\\
				\alpha^{l_1}&\alpha^{l_2}&\cdots&\alpha^{l_{r+1}} \\
				\vdots & \vdots& \ddots&\vdots\\
				\alpha^{(r-1)l_1}&\alpha^{(r-1)l_2}&\cdots&\alpha^{(r-1)l_{r+1}} \\
			\end{pmatrix}\nonumber,
		\end{eqnarray}$B'$ is the $r\times(r+1)$ matrix
		\begin{eqnarray}
B'=			\begin{pmatrix}
				1 & 1 &\cdots&1\\
				\alpha^{l_1'}&\alpha^{l_2'}&\cdots&\alpha^{l_{r+1}'} \\
				\vdots & \vdots& \ddots&\vdots\\
				\alpha^{(r-1)l_1'}&\alpha^{(r-1)l_2'}&\cdots&\alpha^{(r-1)l_{r+1}'} \\
			\end{pmatrix}\nonumber.
		\end{eqnarray}
By swapping some rows and columns of the matrix $M$, we can obtain the following matrix $Q$.
		\begin{eqnarray}
			M\xrightarrow[]{1}\begin{pmatrix}
				A&\mathbf{0}_{r,r+1}\\
				\mathbf{0}_{r,r+1}&B\\
				P_{\hat{t}_1}'&P_{\hat{t}_1}'\\
			\end{pmatrix}\xrightarrow[]{2}Q:=\begin{pmatrix}
				A&\mathbf{0}_{r,r+1}\\
				\mathbf{0}_{r,r+1}&B\\
				I_{r+1}&I_{r+1}\\
			\end{pmatrix}\label{eq.7},
		\end{eqnarray}where $I_{r+1}$ is the $(r+1)\times(r+1)$ identity matrix, $B$ is the $r\times(r+1)$ matrix
		\begin{eqnarray}
B=			\begin{pmatrix}
				1 & 1 &\cdots&1\\
				\alpha^{<l_1+\hat{t}_1-\hat{t}_2>}&\alpha^{<l_1+\hat{t}_2-\hat{t}_2>}&\cdots&\alpha^{<l_{r+1}+\hat{t}_1-\hat{t}_2>} \\
				\vdots & \vdots& \ddots&\vdots\\
				\alpha^{(r-1)<l_1+\hat{t}_1-\hat{t}_2>}&\alpha^{(r-1)<l_2+\hat{t}_1-\hat{t}_2>}&\cdots&\alpha^{(r-1)<l_{r+1}+\hat{t}_1-\hat{t}_2>} \\
			\end{pmatrix}\nonumber.
		\end{eqnarray} 
In the first step of Eq. \eqref{eq.7}, we swap some of  the last $r+1$ columns, the matrix $B'$ is transformed into $B$ and $P_{\hat{t}_2}'$ is transformed into $P_{\hat{t}_1}'$. In the second step 2 of Eq. \eqref{eq.7}, we swap some of the last $r+1$ rows such that the permutation matrix is transformed into the identity matrix.
Since 
\begin{align*}
\text{rank}\Big(\begin{pmatrix}
			A&\mathbf{0}_{r,r+1}\\
			I_{r+1}&I_{r+1}\\
		\end{pmatrix}\Big)=\text{rank}(A)+\text{rank}(I_{r+1})=2r+1,
\end{align*}
it is sufficient to show that there exists an integer $j\in\{1,2,\cdots,r\}$ such that the matrix 
\[Q_j:=\begin{pmatrix}
			A&\mathbf{0}_{r,r+1}\\
			\mathbf{0}_{1,r+1}&B_j\\
			I_{r+1}&I_{r+1}\\
		\end{pmatrix}
\]
is invertible, where $B_j$ represents the $j$-th row of matrix $B$, i.e., $$B_j=(\alpha^{(j-1)<l_1+\hat{t}_1-\hat{t}_2>},\alpha^{(j-1)<l_2+\hat{t}_1-\hat{t}_2>},\cdots,\alpha^{(j-1)<l_{r+1}+\hat{t}_1-\hat{t}_2>}).$$ 
When $j=1$, we can show that $\text{rank}(Q_1)=2r+1$. In the following, we consider that $j\in\{2,\cdots,r\}$. 
By adding $\alpha^{(j-1)<l_u+\hat{t}_1-\hat{t}_2>}$ times of the $r+1+u$-th row of $Q_j$ to the $r+1$-th row of $Q_j$ for all $u\in\{1,2,\cdots,r+1\}$,
the matrix $Q_j$ is transformed into $$\begin{pmatrix}
			A&\mathbf{0}_{r,r+1}\\
			B_j&\mathbf{0}_{1,r+1}\\
			I_{r+1}&I_{r+1}\\
		\end{pmatrix}.$$ 
Therefore, we have \begin{align*}
\det(Q_j)=\det(\begin{pmatrix}
			A\\
			B_j\\
		\end{pmatrix})=&\det\Big( \begin{pmatrix}
				1 & \cdots&1\\
				\alpha^{l_1}&\cdots&\alpha^{l_{r+1}} \\
				\vdots &\ddots&\vdots\\
				\alpha^{(r-1)l_1}&\cdots&\alpha^{(r-1)l_{r+1}} \\
				\alpha^{(j-1)<l_1+\hat{t}_1-\hat{t}_2>}&\cdots&\alpha^{(j-1)<l_{r+1}+\hat{t}_1-\hat{t}_2>} \\
			\end{pmatrix}\Big)\\
   =&\sum_{u=1}^{r+1}(\alpha^{(j-1)<l_u+\hat{t}_1-\hat{t}_2>}\prod_{\forall1\leq s<v\leq r+1,s\neq t,v\neq t}(\alpha^{l_s}-\alpha^{l_v})).
\end{align*}
If the above determinant is non-zero, then we can repair the erased $2r+2$ symbols; otherwise, we can't repair the erased $2r+2$ symbols.
		
We consider that $\hat{t}_2=m$. With similar proof of the case of $\hat{t}_2\leq m-1$, we can show that we can't repair the erased $2r+2$ symbols.

        \section{Remaining Proof of Theorem \ref{th11}}
        \label{app.2}
	
	\begin{proof}
Suppose that the first $a+1$ nodes $0,1,\ldots,a$ and the last $r$ nodes $k,k+1,\ldots,k+r-1$ are erased. It is sufficient to repair $(a+1)m$ erased data symbols in the first $a+1$ nodes.
According to the construction of our GSRCs, there are total $\sum_{t=0}^{a-1}(m+t)=am+\frac{a(a-1)}{2}$ coded symbols in the surviving nodes which are linear combinations of some erased data symbols. In the following, we show that we can't repair the erased nodes by considering two cases: $m\leq a$ and $m>a$.

When $m\leq a$, according to the repair method of ``\textbf{Repair the first $m$ symbols of node $f_{a+1}$}" in the proof of Theorem \ref{th11}, we can repair the symbols $\{x_{a+\delta-u,u}\}_{u=\delta}^{m-1}$ for all $\delta\in\{0,1,\cdots,m-1\}$. There are $\sum_{\delta=0}^{m-1}(m-\delta)=\frac{m(m+1)}{2}$ symbols.
Therefore, we need to repair the other $(a+1)m-\frac{m(m+1)}{2}$ erased data symbols. However, the number of surviving coded symbols which are linear combinations of some of the erased $(a+1)m-\frac{m(m+1)}{2}$ data symbols is $\sum_{i=a+1}^{a+m}(2a-i)=am-\frac{m(m+1)}{2}$, which is strictly less than the number of the remaining erased data symbols. Therefore, it is impossible to repair all the erased data symbols.

When $m>a$, according to the repair method of ``\textbf{Repair the first $m$ symbols of node $f_{a+1}$}" in the proof of Theorem \ref{th11}, we can repair $\sum_{\delta=m-a}^{m-1}(m-\delta)=\frac{a(a+1)}{2}$ data symbols $\{x_{a+\delta-u,u}\}_{u=\delta}^{m-1}$ for all $\delta\in\{m-a,\cdots,m-1\}$. We still need to repair the other $(a+1)m-\frac{a(a+1)}{2}$ erased data symbols. However, the number of surviving coded symbols which are linear combinations of some of the remaining erased $(a+1)m-\frac{a(a+1)}{2}$ data symbols is $(m-a)a+\sum_{i=m+1}^{a+m-1}(a+m-i)=am-\frac{a(a+1)}{2}$, which is strictly less than the number of the remaining erased data symbols. We can't repair all the erased data symbols.

Therefore, it is impossible to repair the erased $r+a+1$ nodes.
	\end{proof}

        \end{appendices}

	\ifCLASSOPTIONcaptionsoff
	\newpage
	\fi
	
	\bibliographystyle{IEEEtran}
	\bibliography{CNC-v1}

\begin{thebibliography}{10}
\providecommand{\url}[1]{#1}
\csname url@samestyle\endcsname
\providecommand{\newblock}{\relax}
\providecommand{\bibinfo}[2]{#2}
\providecommand{\BIBentrySTDinterwordspacing}{\spaceskip=0pt\relax}
\providecommand{\BIBentryALTinterwordstretchfactor}{4}
\providecommand{\BIBentryALTinterwordspacing}{\spaceskip=\fontdimen2\font plus
\BIBentryALTinterwordstretchfactor\fontdimen3\font minus
  \fontdimen4\font\relax}
\providecommand{\BIBforeignlanguage}[2]{{%
\expandafter\ifx\csname l@#1\endcsname\relax
\typeout{** WARNING: IEEEtran.bst: No hyphenation pattern has been}%
\typeout{** loaded for the language `#1'. Using the pattern for}%
\typeout{** the default language instead.}%
\else
\language=\csname l@#1\endcsname
\fi
#2}}
\providecommand{\BIBdecl}{\relax}
\BIBdecl

\bibitem{dimakis2010}
A.~Dimakis, P.~Godfrey, Y.~Wu, M.~Wainwright, and K.~Ramchandran, ``Network
  coding for distributed storage systems,'' \emph{IEEE Trans. Information
  Theory}, vol.~56, no.~9, pp. 4539--4551, Sep. 2010.

\bibitem{2018A}
S.~B. Balaji and P.~V. Kumar, ``A tight lower bound on the sub- packetization
  level of optimal-access {MSR} and {MDS} codes,'' in \emph{Proc. {IEEE} Int.
  Symp. Inf. Theory}, 2018, pp. 2381--2385.

\bibitem{SRC}
\BIBentryALTinterwordspacing
D.~S. Papailiopoulos, J.~Luo, A.~G. Dimakis, C.~Huang, and J.~Li, ``Simple
  regenerating codes: Network coding for cloud storage,'' \emph{CoRR}, vol.
  abs/1109.0264, 2011. [Online]. Available:
  \url{http://arxiv.org/abs/1109.0264}
\BIBentrySTDinterwordspacing

\bibitem{SRC2}
Papailiopoulos, D.~S., J.~Luo, A.~G. Dimakis, C.~Huang, and J.~Li, ``Simple
  regenerating codes: Network coding for cloud storage,'' in \emph{2012
  Proceedings IEEE INFOCOM}, 2012, pp. 2801--2805.

\bibitem{huang2012}
C.~Huang, H.~Simitci, Y.~Xu, A.~Ogus, B.~Calder, P.~Gopalan, J.~Li, and
  S.~Yekhanin, ``Erasure coding in windows azure storage,'' in \emph{Usenix
  Conference on Technical Conference}, 2012.

\bibitem{2014Locally}
D.~S. Papailiopoulos and A.~G. Dimakis, ``Locally repairable codes,''
  \emph{IEEE Trans. Information Theory}, vol.~60, no.~10, pp. 5843--5855, 2014.

\bibitem{Bundle}
J.-F. Pâris, ``Bundling together {RAID} disk arrays for greater protection and
  easier repairs,'' in \emph{2019 IEEE 27th International Symposium on
  Modeling, Analysis, and Simulation of Computer and Telecommunication Systems
  (MASCOTS)}, 2019, pp. 256--261.

\bibitem{2013PMDS}
M.~Blaum, J.~L. Hafner, and S.~Hetzler, ``{Partial-MDS} codes and their
  application to {RAID} type of architectures,'' \emph{IEEE Transactions on
  Information Theory}, vol.~59, no.~7, pp. 4510--4519, 2013.

\bibitem{2016PMDS}
M.~Blaum, J.~S. Plank, M.~Schwartz, and E.~Yaakobi, ``Construction of {Partial
  MDS} and sector-disk codes with two global parity symbols,'' \emph{IEEE
  Transactions on Information Theory}, vol.~62, no.~5, pp. 2673--2681, 2016.

\bibitem{2017PMDS}
G.~Calis and O.~O. Koyluoglu, ``A general construction for {PMDS} codes,''
  \emph{IEEE Communications Letters}, vol.~21, no.~3, pp. 452--455, 2017.

\bibitem{2019PMDS}
R.~Gabrys, E.~Yaakobi, M.~Blaum, and P.~H. Siegel, ``Constructions of {Partial
  MDS} codes over small fields,'' \emph{IEEE Transactions on Information
  Theory}, vol.~65, no.~6, pp. 3692--3701, 2019.

\bibitem{2020PMDS}
S.~Gopi, V.~Guruswami, and S.~Yekhanin, ``Maximally recoverable {LRCs}: A field
  size lower bound and constructions for few heavy parities,'' \emph{IEEE
  Transactions on Information Theory}, vol.~66, no.~10, pp. 6066--6083, 2020.

\bibitem{2021PMDS}
L.~Holzbaur, S.~Puchinger, E.~Yaakobi, and A.~Wachter-Zeh, ``Partial {MDS}
  codes with regeneration,'' \emph{IEEE Transactions on Information Theory},
  vol.~67, no.~10, pp. 6425--6441, 2021.

\end{thebibliography}
\end{document}